
\documentclass[nohyperref]{article}

\usepackage{microtype}
\usepackage{graphicx}
\usepackage{booktabs} 

\usepackage{hyperref}


\usepackage[accepted]{icml2022}


\usepackage{amsmath}
\usepackage{amssymb}
\usepackage{mathtools}
\usepackage{bbold}
\usepackage{amsthm}
\usepackage{float}
\usepackage{subcaption}

\usepackage[capitalize,noabbrev]{cleveref}

\theoremstyle{plain}
\newtheorem{theorem}{Theorem}[section]

\newtheorem{lemma}[theorem]{Lemma}
\newtheorem{corollary}[theorem]{Corollary}
\newtheorem{property}{Property}
\theoremstyle{definition}
\newtheorem{definition}[theorem]{Definition}

\theoremstyle{remark}

\usepackage[textsize=tiny]{todonotes}

\newcommand{\xx}{\ensuremath{\boldsymbol{{x}}}}
\newcommand{\xxi}{\ensuremath{\boldsymbol{\xi}}}
\newcommand{\zz}{\ensuremath{\boldsymbol{z}}}
\newcommand{\hh}{\ensuremath{\boldsymbol{h}}}

\newcommand{\XX}{\ensuremath{\boldsymbol{X}}}
\newcommand{\pp}{\ensuremath{\boldsymbol{p}}}
\newcommand{\bva}{\ensuremath{\boldsymbol{a}}}
\newcommand{\bvv}{\ensuremath{\boldsymbol{v}}}

\newcommand{\mean}{\ensuremath{\boldsymbol{\mu}}}
\newcommand{\varm}{\ensuremath{\boldsymbol{\Sigma}}}

\newcommand{\RR}{\ensuremath{\boldsymbol{R}}}

\newcommand\norma[1]{\left\lVert#1\right\rVert}

\icmltitlerunning{DiffSim for Rare Events}

\begin{document}

\twocolumn[
\icmltitle{Differentiable Simulations for Enhanced Sampling of Rare Events }



\begin{icmlauthorlist}

\icmlauthor{Martin Šípka}{MIT,MFF}
\icmlauthor{Johannes C. B. Dietschreit}{MIT}
\icmlauthor{Lukáš Grajciar}{PRF}
\icmlauthor{Rafael Gómez-Bombarelli}{MIT}

\end{icmlauthorlist}

\icmlaffiliation{MIT}{Department of Materials Science and Engineering, Massachusetts Institute of Technology, Cambridge, Massachusetts 02139, USA}
\icmlaffiliation{MFF}{Mathematical Institute, Faculty of Mathematics and Physics, Charles University, Sokolovská 83, 186 75 Prague, Czech Republic}
\icmlaffiliation{PRF}{Department of Physical and Macromolecular Chemistry, Faculty of Sciences, Charles University, 128 43
Prague 2, Czech Republic}

\icmlcorrespondingauthor{Rafael Gómez-Bombarelli}{rafagb@mit.edu}

\icmlkeywords{Differentiable simulations, enhanced sampling, neural differential equations}

\vskip 0.3in
]



\printAffiliationsAndNotice{}  

\begin{abstract}
Simulating rare events, such as the transformation of a reactant into a product in a chemical reaction typically requires enhanced sampling techniques that rely on heuristically chosen collective variables (CVs). 
We propose using differentiable simulations (DiffSim) for the discovery and enhanced sampling of chemical transformations without a need to resort to preselected CVs, using only a distance metric. 
Reaction path discovery and estimation of the biasing potential that enhances the sampling are merged into a single end-to-end problem that is solved by path-integral optimization. 
This is achieved by introducing multiple improvements over standard DiffSim such as partial backpropagation and graph mini-batching making DiffSim training stable and efficient.
The potential of DiffSim is demonstrated in the successful discovery of transition paths for the Muller-Brown model potential as well as a benchmark chemical system - alanine dipeptide.
\end{abstract}

\section{Introduction}
\label{intro}

A chemical reaction can be viewed as a transition from one depression (reactant) on the potential energy surface (PES) to another (product). 
The most likely transition path(s) connecting the two basins define the reaction mechanism(s). 
The potential energy of a saddle point, through which the system has to pass, defines the reaction barrier and is the fundamental quantity when investigating reaction rates.
The major obstacle in determining the reaction path lies in the high dimensionality of the molecular configuration space that can easily be spanned by thousands of degrees of freedom (DoF). 
Extensive sampling of configurations along candidate transition paths, characterized by comparatively high free energies\cite{Chipot2007, Chipot2014a} is needed, but standard unbiased sampling algorithms, e.g., molecular dynamics (MD) or Monte-Carlo (MC), often remain trapped in (meta)stable regions.
Therefore, it is extremely inefficient to explore candidate paths in an unbiased way, and it is necessary to adopt heuristics to bias the exploration, which are often based on expert chemical intuition. 

The problem has been commonly split into two seemingly easier sub-tasks. 
First, a dimensionality reduction from all DoFs down to the so-called collective variables (CVs) and second, enhanced sampling along those CVs \cite{Torrie1977a, Darve2001a, Laio2002a, Abrams2013a, Spiwok2015, Valsson2016}. 
Even though widely used methods exist to solve the second problem, the first part - identifying collective variables - is still largely a manual task based on expert chemical intuition, with the commonly used CVs being not much more complex than simple linear combinations of manually chosen internal DoFs of the molecular systems in question. 
Recently, the task of identifying CVs has been partially automatized by multiple machine learning based tools.\cite{Sultan2018AutomatedLearning, Mendels2018b, Wehmeyer2018Time-laggedKinetics, Wang2019e, Bonati2020a, Wang2021a, Sun2022MultitaskEvents, Sipka2022UnderstandingRepresentations} 
The number of the CVs is typically limited to one to three due to the exponential growth of computational cost, known as the curse of dimensionality \cite{Bellman1967, Koppen}.
The CVs should be based on those DoFs, which fully describe the rare transition event, and are thereby associated with the slowest motions. 
However, identification of the important DoFs \textit{a priori} typically requires knowledge of the transition path that one is trying to discover in the first place, \textit{i.e.}, one still ends up with the proverbial "chicken-and-egg problem" \cite{Rohrdanz2013DiscoveringReactions}. This is a problem that previously proposed machine learning based tools cannot directly tackle. 
Additionally, once the CVs are chosen it is very difficult to correct them on-the-fly. 
Thus, one must be certain that the chosen CV function is properly defined and well behaved in all regions (i.e., the CV values for reactant and product basins do not overlap or it shows undefined behavior for unseen configurations). 
A hard task for tools such as neural networks as they often extrapolate poorly when presented with unseen data.
To solve these problems, iterative improvements of CVs have been proposed \cite{Chen2018CollectiveDesign, Belkacemi2022ChasingTrajectories} where both, CV training algorithm and biasing method, are iterated until the final results of the biased dynamics is satisfactory. 
This can be slow as the enhanced sampling needs to be rerun for each iteration of the CV.
Once equipped with a low-dimensional representation of the chemical reaction, the enhanced sampling algorithms usually introduce a biasing potential, which is a function of the identified CVs and modifies the original PES by lowering the reaction barrier. 
If CVs and enhanced sampling technique are chosen well, the biased simulation will significantly increase the occurrence of reactive events, and subsequent analysis will allow us to understand the reaction mechanism and to calculate reaction barrier and rate. 

Simulations that are fully differentiable have been developed for optimization, control, and learning of motion,\cite{Degrave2016ARobotics, deAvilaBelbute-Peres2018End-to-EndControl, Hu2019Taichi, Hu2020DiffTaichi:Simulation} but also for the learning and optimization of quantities of interest in molecular dynamics \cite{Wang2020DifferentiableLearning, Ingraham2019LearningSimulator, Greener2021DifferentiableProteins}. 
Differentiating through simulations comes naturally from the optimization of path-dependent quantities (the famous Brachistochrone curve problem is included in \cref{ape:brachisto} for the novice reader). 
If the minimization of a loss function cannot be formulated separately for every point in the path, then optimization has to include the whole path leading up to it. 
While the results of DiffSims are often promising, it is well known \cite{Metz2021GradientsNeed} that naïvely backpropagated gradients may vanish or explode, and thus not lead to a useful parameter update. 
How to control their behaviour remains an open challenge. 
This problem of differentiable simulations is associated with the spectrum of the system's Jacobian\cite{Metz2021GradientsNeed, Galimberti2021HamiltonianDesign} and closely connected to the chaotic nature of the simulated equations. 
Therefore, in order to employ path differentiation, one needs to find ways to produce well behaving and controllable gradients.
In this contribution, the loss gradient behaviour is thoroughly investigated, and a mechanism to control its fluctuations and magnitude is proposed.
Employing the improved DiffSims, we define a differentiable loss function that, when minimized, results in the robust training of a biasing potential, which enhances the sampling of reactive transitions without prior determination of CVs.   

The manuscript is structured as follows. In Section~2 we define molecular dynamics simulations biased with a learnable potential, introduce a formalism to describe chemical reactions using path integrals, and outline the concept of differentiable simulations.
We discuss the current challenges and limitations of DiffSims in Section~3 and propose novel techniques to resolve them.
In section~4 we outline the practical implementation of our method.
In Section~5, we demonstrate the usefulness of DiffSims in the context of chemical reactions by training the bias function promoting barrier crossing for the well-studied Muller-Brown potential as well as the alanine-dipeptide molecule. 

\section{Problem Definition}

Molecular dynamics is commonly used to explore reaction processes on a atomistic level. 
Let the column vector $\xx \in \mathbb{R}^{N}$ denote the mass-weighted coordinates of the system and $\pp$ the conjugate momenta.
The particle motion is simulated using Hamiltonian equations with potential energy function $U_0(\xx)$. 
\begin{equation} \label{eq:md_standard}
\begin{split}
   \dot{\xx}(t) &= {\pp(t)} \\
   \dot{\pp}(t) &= -\frac{\partial U_0(\xx(t))}{\partial \xx}
\end{split}
\end{equation}
These equations conserve energy and are purely reversible with respect to time. 
However, it is common in molecular modeling not to work with the micro-canonical ensemble but rather with the canonical ensemble that conserves temperature \cite{Callen1998ThermodynamicsEd}. 
This is realized by using a thermostat coupled to the system. 
In this work, we choose the Langevin thermostat because of its implementational simplicity and its favorable properties with respect to differentiating along the computational graph, as will be shown later (see~\cref{sec:PatialBackprop}). 
In Langevin dynamics, the thermostat is coupled to the system through the friction constant $\gamma$ \eqref{eq:biased_dyn}.

To increase the probability of the barrier crossing we modify the PES with a learnable bias term $B(\xx, \theta)$
\begin{equation} \label{eq:biased_potential}
    U(\xx,\theta) = U_0(\xx) + B(\xx, \theta),
\end{equation}
where the biasing function is parameterized by $\theta$, which we aim to train to increase the frequency of reaction events.
The biased dynamics evolve according to
\begin{equation} \label{eq:biased_dyn}
\begin{split}
   \dot{\xx}(t) &= {\pp(t)} \\
   \dot{\pp}(t) &= -\frac{\partial U(\xx(t))}{\partial \xx} - \gamma \pp(t) + \sqrt{2 \gamma k_B T} \RR(t),
\end{split}
\end{equation}
where $k_b$ is the Boltzmann constant, $T$ the absolute temperature of the bath, and $\RR(t)$ a Gaussian process.

\subsection{Formal characterization of (chemical) reactions}

It is often suitable to use general curvilinear coordinates and not simply Cartesian or mass-weighted coordinates to describe reactions. 
Common are internal coordinates such as interatomic distances, angles, or dihedrals, as they are invariant with respect to system rotation and translation. 
These special coordinates are denoted with $\xxi(\xx) \in \mathbb{R}^M$ and $M \leq N$.

The wells $W_\alpha$ of reactant (-1) and product (1), divided by a reaction barrier, are characterized by the set of points $\Gamma_\alpha$ ($\alpha=-1,1$), which correspond to the equilibrium configurations of reactants and products, i.e., we expect an unbiased simulation on the PES $U_0(\xx)$, to stay in these wells with a very high probability. 
We approximate the wells with a multivariate normal distribution. 
From short, unbiased simulations, we estimate mean $\mean_\alpha$ and covariance matrix $\varm_\alpha $. We consider a point to be part of a well if the probability of the point belonging to the distribution is above some chosen probability threshold. 
\begin{equation}
    W_\alpha = \left\{\xx\ |\ (\xxi(\xx) - \mean_\alpha)^T \Sigma_\alpha^{-1} (\xxi(\xx) - \mean_\alpha)
    < \epsilon \right\} \ ,
    \label{eq:ball}
\end{equation}
where epsilon can be obtained from $\chi^2$ distribution. 
The indicator function for a well is
\begin{equation}
  \mathbb{1}_\alpha(\xx) =
  \left\{
    \begin{array}{l c r}
        1 & \text{for} & \xx \in W_\alpha   \\
        0 & \text{for} & \xx \notin W_\alpha .
    \end{array}
  \right.
  \label{eq:indicator}
\end{equation}
%
%
In this manuscript, we only consider transitions between two wells, $W_{-1}$ and $W_1$. 
Additional basins would be handled analogously. 
The (escape) probability $p_\alpha$, within a specified time interval $(t_0, t_e)$ of a transition $W_{-\alpha} \rightarrow W_\alpha$ is defined as
\begin{equation}
    p_\alpha = P\left(\int_{t_0}^{t_e} \mathbb{1}_{\alpha}(\xx(t)) \, \mathrm{d}t > 0 \; \middle|\; \xx(t_0) \in W_{-\alpha}\right) \ ,
\label{eq:init_pob}
\end{equation}
where $t_0$ is the start and $t_e$ the end time of the trajectory $\XX$.
This can be understood as the probability of finding at least one point in $W_\alpha$ of a trajectory that has started in $W_{-\alpha}$. 
Our objective is to increase both $p_1$ and $p_{-1}$ simultaneously to a level where both events can be observed frequently on a typical simulation time scale. 

\subsection{Optimizing the probability}
The form of the probability in \eqref{eq:init_pob} is not usable for differentiable optimization and needs to be recast to a differentiable, continuous form.
Under suitable regularity conditions, we can replace the expression of \eqref{eq:init_pob} with 
\begin{equation}
    p_\alpha = P \left(\sup_{t<t_e}( \mathbb{1}_{\alpha}(\xx(t)) > 0\ \middle|\; \xx(t_0) \in W_{-\alpha} \right).
\end{equation}
%
%
%
We can then define a soft loss function that is continuous everywhere and differentiable for any trajectory $\XX$ with $\xx(t_0) \in W_{-\alpha}$ as
\begin{equation}
    L = L_{\xxi_\alpha} = 
    \begin{cases}
        0  & \text{if}\ \exists \ \xx(t) \in W_\alpha \\
        \min\limits_{t_0<t<t_e} (\xxi(\xx(t)) - \xxi_{\alpha})^2 & \text{otherwise}
    \end{cases}
    \ ,
\label{eq:actual_loss}
\end{equation}
where for each trajectory a random single $\xxi_\alpha \in \Gamma_\alpha$ is selected for the loss function by running a short, unbiased simulation. 
Choosing random targets is done to increase the configuration space the simulation is forced to cover, avoiding targeting a particular point, thus making the optimization more robust.
Minimizing this loss function leads to a maximization of the probability \eqref{eq:init_pob} and can be seen as the minimization of a path-dependent integral. 
In the following Section, we will define a method that can be employed to minimize \eqref{eq:actual_loss}. 
Note that the loss function is defined only for one point of the trajectory and is influenced by the dynamics of every point that proceeds it. 
\subsection{Differentiable simulations}
For the problem at hand, the parameters $\theta$ of the bias potential \eqref{eq:biased_potential} have to be optimized such that the loss \eqref{eq:actual_loss} is minimal.
For a differentiable simulation, the information that we gain by differentiating the loss function at the point where it is defined can be used for optimization along the whole trajectory. 
While we could proceed by considering the simulation as a forward process, saving the computational graph for the entire path would be extremely memory-demanding. 
Instead, the optimization process can be conveniently reformulated using the adjoint equation and resulting adjoint vectors, using which the system dynamics can be run backwards to an arbitrary time, leading to memory-savings and the ability to adjust extent of backpropagation based on the sought-for dynamical scale (see Section~3). 
We employ the framework and notation adapted recently for neural networks\cite{Chen2018NeuralEquations} from the original work by \cite{LevSemenovichPontryagin1962TheProcesses}.

We propagate the state $\zz(t)=(\xx(t), \pp(t))$ using the biased Langevin dynamics  where the right side of \eqref{eq:biased_dyn} shall be denoted as $f(\zz(t), \theta) = \dot{\zz}(t)$. 
Propagating $\zz(t)$ using $f(\zz(t))$ is called the \textit{forward process}. 
Notice that $f(\zz(t))$ has no explicit time dependence (only through $\zz(t)$). 
We define the adjoint vectors for this equations as 
\begin{equation}
    \bva(t) = \frac{\partial L}{\partial \zz(t)} \ .
    \label{eq:adjoint}
\end{equation}
To solve \eqref{eq:adjoint} we introduce the new time $\tau \in (0, \tau_{e})$ such that $\zz(\tau=0) = \zz(t=t_e)$ and $\zz(\tau=\tau_{e}) = \zz(t=0)$. 
This backward flowing time reflects that the loss is not influenced by any points further in forward time. 

In forward moving time $t$, the adjoint vectors obey the equations
\begin{equation}
\begin{split}    
    \bva(t_e) &= \frac{\partial L}{\partial \zz(t_e)} \\
    \dot{\bva}(t) &= - \bva(t)^T \frac{\partial f(\zz(t), \theta)}{\partial \zz},
\end{split}
\end{equation}
or in backward going time $\tau$, the equation
\begin{equation} \label{eq:adjoint_tau}
\begin{split}    
    \bva(\tau = 0) &= \frac{\partial L}{\partial \zz(\tau = 0)} \\
    \dot{\bva}(\tau) &= \bva(\tau)^T \frac{\partial f(\zz(\tau), \theta)}{\partial \zz}.
\end{split}
\end{equation}
The total gradient of the loss function with respect to bias parameters is then obtained by
\begin{equation}
    \frac{\partial L}{\partial \theta} = \int_{0}^{\tau_e} \bva(\tau)^T \frac{\partial f(\zz(\tau), \theta)}{\partial \theta} \mathrm{d}\tau
    \ .
\label{eq:backprop_jvp}
\end{equation}
While solving \eqref{eq:adjoint_tau}, $\zz(\tau)$ can be either saved or reconstructed by running dynamics \eqref{eq:biased_dyn} backward, depending on the memory and computational trade-off we would like to maintain. The algorithm for running the adjoint method for the dynamics that includes random noise is developed and analyzed in \cite{Li2020ScalableEquations}.
\section{Challenges and Solutions}
\subsection{Challenges}
Ideally, one would simulate the biased dynamics \eqref{eq:biased_dyn}, compute the loss \eqref{eq:actual_loss}, backpropagate by solving \eqref{eq:adjoint_tau}, and after a number of training epochs obtain the biasing potential that enhances transitions. 
However, differentiable simulations at their current state cannot be used out of the box. 
There exist several issues that need to be addressed. 
\begin{enumerate}
    \item \textbf{Gradient control} \label{prob:bound} \\
        Significant effort has been devoted in the past years to understand the behavior of gradients that arise while optimizing neural network controlled differentiable simulations \cite{Suh2022DoGradients, Huang2021PlasticineLab:Physics, Metz2021GradientsNeed}. 
        Some of the main challenges in this respect are the explosion or the vanishing of gradients when training deep neural networks. 
        A differentiable simulation can be arbitrarily deep, however, it can be challenging to backpropagate complex Hamiltonians in a controllable manner to such depths.
        In fact, it is possible to construct a simple Hamiltonian that gives rise to exploding gradients when using \eqref{eq:biased_dyn} \cite{Galimberti2021HamiltonianDesign}. 

    \item \textbf{Multiscale Problem} \label{prob:multi}\\
        The dynamics on $U_0(\xx)$ can include very high and very low frequency motions. However, only the slow dynamics should be controlled by the trainable bias, as those are associated with the sought-after chemical reactions. 
        High frequency modes, e.g., hydrogen vibrations in the case of molecules, usually do not contribute to the reaction mechanism. 
        Avoiding fitting such fast fluctuations is desirable as it reduces the noise in the gradients used for DiffSim training. 
        
    \item \textbf{Chaotic behaviour} \label{prob:chaos} \\
        One important property of some Hamiltonian systems is the emergence of chaos \cite{PercivalI1987ChaosSystems}. 
        Small changes in initial conditions result in exponentially different trajectories.
        Great care must be taken to predict and control the behaviour of such systems. 

    \item \textbf{One large parameter update per trajectory} \label{prob:minib} \\
        Differentiable simulations in their original formulation produce one update per trajectory. Obtaining a sufficient number of gradient updates can be very expensive when long trajectories are required.

\end{enumerate}

All these challenges are addressed by the present work.
We show how to efficiently learn slow dynamics necessary for the investigation of chemical reactions while keeping gradients under control.

\subsection{Partial backpropagation}
\label{sec:PatialBackprop}

To reduce the complexity and level of detail in the equations, the computational graph is pruned such that backpropagation occurs only in the momenta. 
This is realized by adoption of the \textit{.detach()} operator introduced, e.g., in Refs.~\cite{Foerster2018DiCE:Estimator, Schulman2015GradientGraphs, Zhang2019AutomaticApplication}, which stops the flow of the gradient through $\xx$. 
The backpropagation only in momenta can be reasoned as follows:
\begin{itemize}
\item For timescales $\Delta \tau$, typical for the slow dynamics in the system, we assume $\xx$ to linearly approach the target value. 
This discards fast oscillations in $\xx$. In other words, for timescales $\Delta \tau$ we expect the change in $\xx$ to be linear in time
\begin{equation}
    \frac{\partial \xx(\tau)}{\partial \pp(\tau)} = \Delta \tau \ ,
\end{equation}
which corresponds to the equation for $\xx$ in the form
\begin{equation}
    \xx(\tau) = \xx(\tau - \Delta \tau).detach() + \pp(\tau) \Delta \tau \ .
    \label{eq:detached_ev}
\end{equation}
\item The use of the detach operator reduces the backpropagated ODE to first order in time, neglecting all higher order terms. 
Hence, the backpropagation follows a diffusion type equation, and any high frequency oscillations are removed. 
The change in adjoint dynamics can be seen in the Figure~\ref{fig:adjoint_comp}.
\item The modified dynamics are well behaved with respect to the magnitude of the gradients as shown in the theorems.
\end{itemize}

\begin{figure}
    \centering
    \includegraphics[width=\linewidth]{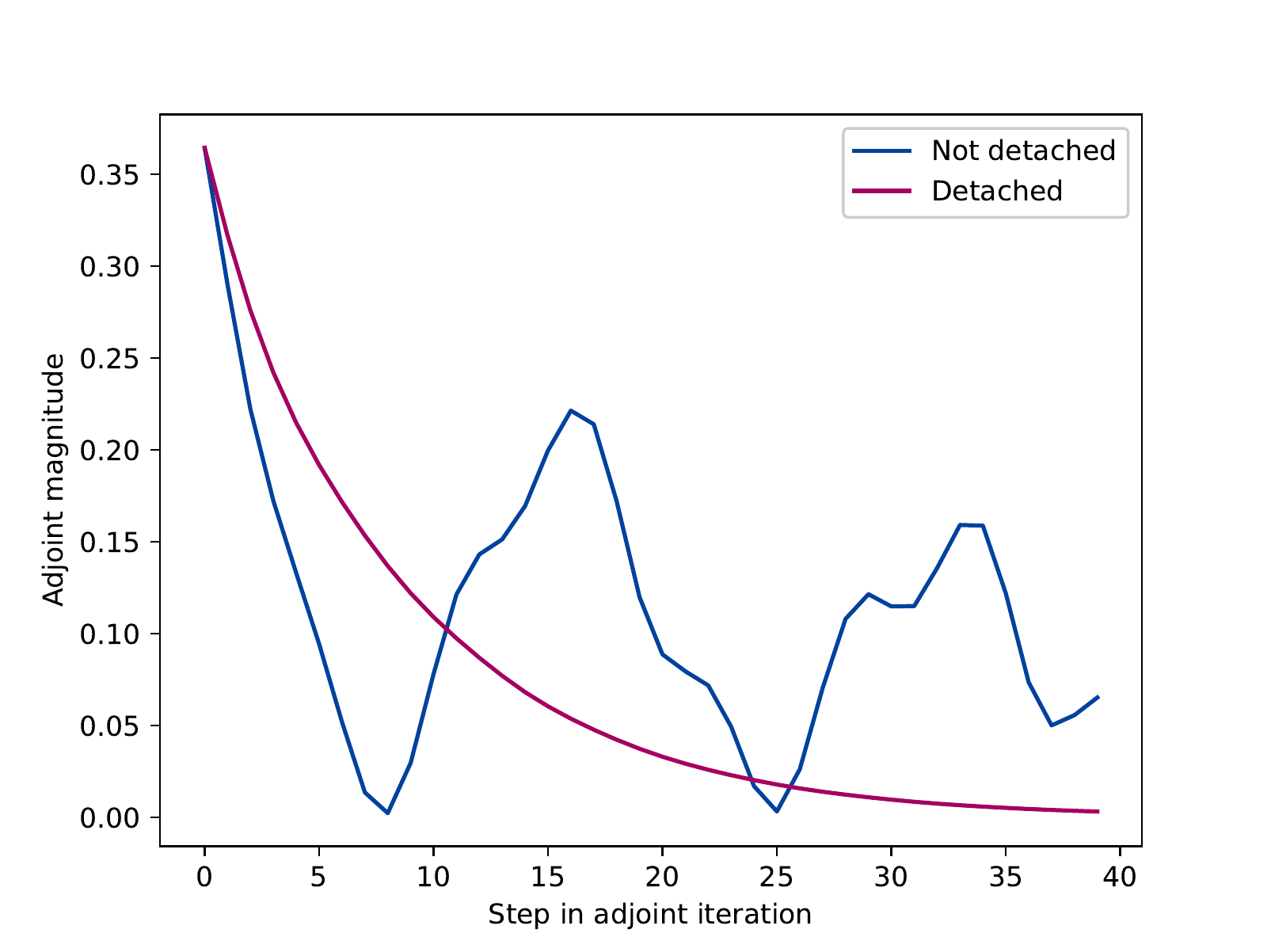}
    \caption{Comparison of the adjoint evolution for original and partially detached graphs for simulations on the 2D Müller-Brown potential with parameters given in \cref{ape:diffsimparams}.}
    \label{fig:adjoint_comp}
\end{figure}

The introduction of the \textit{.detach()} operator reduces the number of equations for which the adjoint is calculated and through which the loss is backpropagated.
Position $\xx(\tau)$ is no longer an independent variable only a function of $\pp(\tau)$.
The detached $\xx$ from the previous timestep is treated as a constant in \cref{eq:detached_ev}. 
The adjoint dynamics is then calculated in only one variable $\pp(\tau)$ and only the evolutionary equation for this variable is considered. 
To simplify the discussion of adjoints we will not discretize the backward equation, but keep it in the continuous form. 
The use of the \textit{.detach()} operator simplifies the adjoint time derivative to
\begin{equation} \label{eq:reduced_adjoint}
\begin{split}
    \dot{\bva}(\tau) 
    &= \bva^T \frac{\partial}{\partial \pp(\tau)} \left( -\frac{\partial  U(\xx(\tau))}{\partial \xx} - \gamma \pp(\tau) + \sqrt{2 \gamma k_B T} \RR(\tau) \right) \\ 
    &= - \bva^T(\tau) \left( \frac{\partial^2 U(\xx(\tau))}{\partial^2 \xx} \frac{\partial \xx(\tau)}{\partial \pp(\tau)}  - \gamma \mathbf{I}\right)  \\
    &= - \bva^T(\tau) \frac{\partial^2 U(\xx(\tau))}{\partial^2 \xx} \Delta \tau - \gamma \bva(\tau) \ .
\end{split}
\end{equation}
Let us now formulate the property of the adjoints that will be useful when designing numerical methods and also gives us some assurance of the non-diverging gradient dynamics. 
Consider a trajectory $x_t$ generated by \eqref{eq:biased_dyn} with time $t$ in a possibly infinite time interval $I \subset (-\infty, \infty)$. 
The loss \eqref{eq:actual_loss} is defined for a point $\xx_{t_L}$. To optimize $B(\xx, \theta)$, we need to backpropagate the gradient of this loss through every point proceeding $\xx_{t_L}$, using backwards flowing time $\tau$. 
To summarize the notation and to set the stage for the proof of finite gradient update, we introduce the following definitions:
\begin{definition} [Differentiable Trajectory]
A Differentiable Trajectory $\mathcal{T}$ is defined by the following quadruple $(\zz(t), L(\zz(t_L), f(\zz(t)), \tilde{f}(\zz(t)))$: Let $\zz(t) \in \Omega_{\xx} \times \Omega_{\pp}$ where $\Omega_{\xx} \subset \mathbb{R}^N$ and $\Omega_{\pp} \subset \mathbb{R}^N$ for $t \in (t_i, t_e)$, where $-\infty \leq t_i < t_e \leq \infty$ be the sequence of states generate by the dynamics $f(\zz(t))$ from a certain initial state $\zz(t_0), t_0 \in [t_i, t_e]$. 
We define a loss function $L(\zz(t_L))$ in time $t_L$. The gradient dynamics of the loss function is guided by the dynamics $\tilde{f}(\zz(t))$ that includes possible $.detach()$ operators. The backward dynamics is represented in the reverse flowing time $\tau$ starting from $\zz(t_L) = \zz(\tau=0)$ to $\zz(\tau_e) = \zz(t_i)$. 
\end{definition}
And we define a Diffusive Differentiable Trajectory by
\begin{definition} [Diffusive Differentiable Trajectory]
A Differentiable Trajectory $\mathcal{T}$ constructed by dynamics \eqref{eq:biased_dyn} equipped with a backward dynamics \eqref{eq:reduced_adjoint} and a loss function \eqref{eq:actual_loss} is called a Diffusive Differentiable Trajectory, denoted by $\mathcal{T}_d$. 
\end{definition}

\begin{property} [Finite gradient update] \label{prop:convergence}
Let $\gamma$ be sufficiently high. Let 
\begin{equation} \label{eq:assumptions_finite}
U(\xx) \in C^2(\Omega_{\xx}) \ \operatorname{and} \ \frac{\partial f(\zz(\tau),\theta)}{\partial \theta} \ \operatorname{bounded}
\end{equation}
Then the gradient update for a loss function in a Diffusive Differentiable Trajectory is finite for every (possibly infinite) $\tau_e$. 
\end{property}
The essence of the proof and specification of the sufficient conditions for $\gamma$ are addressed in the following theorem. 
\begin{theorem}  [Converging adjoints] \label{thm:adjoint_convergence}
$\mathcal{T}_d$ be a Diffusive Differentiable Trajectory. Let $U(\xx) \in C^2(\Omega_{\xx})$ and denote the spectrum of its hessian $\frac{\partial^2 U(\xx(t))}{\partial^2 \xx}$ by $\lambda_i(\xx(t))$. Define $\lambda_{min}$ as
\begin{equation}
\lambda_{min} = \inf_{\tau>0} \min_i \lambda_i(\xx(\tau)).
\end{equation}
Then for every $\gamma$ that fulfills: $\left( {\Delta \tau}\lambda_{min} + \gamma \right) = \epsilon > 0$, it holds:
\begin{equation} \label{eq:upper_bound}
    \forall \tau > 0: \norma{\bva(\tau)}^2 \leq \norma{\bva(0)}^2 e^{-2 \epsilon \tau}
\end{equation}
\end{theorem}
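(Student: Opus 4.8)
The plan is to run a Lyapunov/Grönwall argument directly on the squared adjoint norm $V(\tau) := \norma{\bva(\tau)}^2$. The first step is to put the reduced adjoint dynamics \eqref{eq:reduced_adjoint} into the clean matrix form
\begin{equation}
\dot{\bva}(\tau) = -\bigl( \Delta\tau\, H(\tau) + \gamma \mathbf{I} \bigr)\,\bva(\tau), \qquad H(\tau) := \frac{\partial^2 U(\xx(\tau))}{\partial^2\xx}\ .
\end{equation}
This is licit because $H(\tau)$ is the Hessian of a $C^2$ function, hence symmetric, so the transpose appearing in \eqref{eq:reduced_adjoint} can be dropped. For a fixed realization of the noise $\RR$ the path $\xx(\tau)$ is a fixed continuous curve, so $H(\tau)$ is continuous in $\tau$ and the adjoint ODE has a unique $C^1$ solution; the resulting bound is therefore a \emph{pathwise} statement, which is all that is needed, since the stochastic forcing never enters the adjoint equation ($\partial\RR/\partial\pp = 0$).

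Next I would differentiate $V$ along the flow,
\begin{equation}
\dot V(\tau) = 2\,\bva(\tau)^T\dot{\bva}(\tau) = -2\Delta\tau\,\bva(\tau)^T H(\tau)\,\bva(\tau) - 2\gamma\,\norma{\bva(\tau)}^2 ,
\end{equation}
and then apply the only substantive estimate in the proof: since $H(\tau)$ is symmetric with every eigenvalue at least $\min_i\lambda_i(\xx(\tau)) \ge \lambda_{min}$, the Rayleigh-quotient inequality gives $\bva^T H(\tau)\bva \ge \lambda_{min}\norma{\bva}^2$. Substituting yields the scalar differential inequality
\begin{equation}
\dot V(\tau) \le -2\bigl( \Delta\tau\,\lambda_{min} + \gamma \bigr)\,V(\tau) = -2\epsilon\, V(\tau) .
\end{equation}
Multiplying by the integrating factor $e^{2\epsilon\tau}$ shows that $\tfrac{d}{d\tau}\bigl(e^{2\epsilon\tau}V(\tau)\bigr)\le 0$, hence $V(\tau)\le V(0)\,e^{-2\epsilon\tau}$ for all $\tau\ge 0$, which is exactly \eqref{eq:upper_bound}; because $\epsilon>0$ this holds with no restriction on $\tau_e$, in particular for $\tau_e=\infty$. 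The sufficient condition on the friction is then transparent — one needs $\gamma > -\Delta\tau\,\lambda_{min}$, i.e.\ $\gamma$ large enough to dominate the most negative curvature the trajectory ever experiences (negative curvature being unavoidable near transition states) — and Property~\ref{prop:convergence} then follows by inserting this exponential decay of $\bva$, together with the assumed boundedness of $\partial f/\partial\theta$, into the gradient integral \eqref{eq:backprop_jvp}, whose integrand is thereby dominated by an exponentially decaying, integrable function.

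The main obstacle is not the computation, which is routine, but guaranteeing that $\lambda_{min}$ is in fact finite: the infimum over all $\tau>0$ of the smallest Hessian eigenvalue along the path could in principle be $-\infty$, in which case no admissible $\gamma$ exists. This is implicitly excluded by the hypothesis — writing "$\Delta\tau\lambda_{min}+\gamma=\epsilon>0$" presupposes $\lambda_{min}\in\mathbb{R}$ — and in practice it holds whenever the biased trajectory stays in a bounded region on which $U\in C^2$; I would make this explicit as the precise content of "$\gamma$ sufficiently high" in Property~\ref{prop:convergence}. A secondary point worth one sentence is the regularity needed for the Grönwall step: it suffices that $V$ be absolutely continuous, which is immediate from the continuity of $H(\tau)$ along the path noted above.
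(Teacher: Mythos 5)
Your proof is correct and follows essentially the same route as the paper's: differentiate $\norma{\bva(\tau)}^2$ along the reduced adjoint flow, bound the quadratic form $\bva^T H \bva$ from below by $\lambda_{min}\norma{\bva}^2$ using symmetry of the Hessian (the paper isolates this as Lemma~\ref{lmm:eigenlemma}), and close with Gr\"onwall / an integrating factor. Your added remarks on the pathwise nature of the bound and on the implicit finiteness of $\lambda_{min}$ are sensible refinements but do not change the argument.
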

We proof the theorem in the \cref{ape:adjoint_proof}. There is a useful corollary of the above
\begin{corollary}
Under the assumptions of the theorem \ref{thm:adjoint_convergence}, $\norma{\bva(\tau)} \in L^r(0, \tau_e), \ r \in [1, \infty]$.
\end{corollary}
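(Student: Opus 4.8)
The plan is to read off the $L^r$ membership directly from the pointwise exponential bound \eqref{eq:upper_bound} of \cref{thm:adjoint_convergence}. First I would take square roots in \eqref{eq:upper_bound} to get the cleaner estimate $\norma{\bva(\tau)} \le \norma{\bva(0)}\, e^{-\epsilon \tau}$ for all $\tau \ge 0$. Since $\bva$ solves the linear ODE \eqref{eq:reduced_adjoint}, whose coefficients are continuous under the standing hypothesis $U \in C^2(\Omega_{\xx})$, the map $\tau \mapsto \norma{\bva(\tau)}$ is continuous and hence Lebesgue measurable, so all the integrals below make sense.

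For the endpoint $r = \infty$, I would simply note that $e^{-\epsilon \tau} \le 1$ for every $\tau \ge 0$, whence $\operatorname{ess\,sup}_{\tau \in (0,\tau_e)} \norma{\bva(\tau)} \le \norma{\bva(0)} < \infty$, so $\norma{\bva} \in L^\infty(0,\tau_e)$ regardless of whether $\tau_e$ is finite. For $r \in [1,\infty)$ I would estimate
\begin{equation*}
  \int_0^{\tau_e} \norma{\bva(\tau)}^r \, \mathrm{d}\tau
  \le \norma{\bva(0)}^r \int_0^{\tau_e} e^{-r \epsilon \tau}\, \mathrm{d}\tau
  \le \norma{\bva(0)}^r \int_0^{\infty} e^{-r \epsilon \tau}\, \mathrm{d}\tau
  = \frac{\norma{\bva(0)}^r}{r \epsilon} < \infty ,
\end{equation*}
the decisive ingredient being $\epsilon > 0$ from the hypothesis $\left({\Delta \tau}\lambda_{min} + \gamma\right) = \epsilon > 0$, which makes the exponential integrable on the half-line; crucially the bound does not depend on $\tau_e$, so the physically relevant infinite-horizon case $\tau_e = \infty$ is covered. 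Taking $r$-th roots gives $\left(\int_0^{\tau_e} \norma{\bva(\tau)}^r \mathrm{d}\tau\right)^{1/r} \le \norma{\bva(0)}\,(r\epsilon)^{-1/r}$, and together with the $r=\infty$ bound this establishes the claim for the whole range $r \in [1,\infty]$.

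There is essentially no obstacle here: the corollary is a soft consequence of the exponential decay, and the only points worth a word of care are (i) treating $r = \infty$ separately from $r < \infty$, and (ii) observing that every estimate is uniform in $\tau_e$ so the unbounded-interval case is included. As an alternative one could invoke the general fact that $L^1 \cap L^\infty \subset L^r$ for all $r \in [1,\infty]$, so membership in $L^1$ and $L^\infty$ alone already yields the full range; but the direct computation above is equally short and self-contained, so I would present that.
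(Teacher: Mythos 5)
Your proof is correct and follows essentially the same route as the paper's: the $r=\infty$ case is read off directly from the pointwise bound \eqref{eq:upper_bound}, and for $r\in[1,\infty)$ the $r$-th power is integrated against the exponential envelope, with $\epsilon>0$ guaranteeing finiteness uniformly in $\tau_e$. Your added remarks on measurability and the uniformity of the bound in $\tau_e$ are welcome but do not change the argument.
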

\begin{proof}
Case $r = \infty$ is trivial as the square root of the upper bound \eqref{eq:upper_bound} is still finite $\forall \tau$. Let us now consider only $r \in [1, \infty)$. 
\begin{align}
    \norma{\bva(\tau)}_{L^r(0,\tau_e)}^r & = \int_0^{\tau_e} \norma{\bva(\tau)}^r \leq \norma{\bva(0)}^r \int_0^{\tau_e} e^{-\epsilon \, r \, \tau} \nonumber \\
    & = - \frac{\norma{\bva(0)}^r}{\epsilon \, r} \left[ e^{-\epsilon \, r \, \tau} \right]_0^{\tau_e} 
\end{align}
Which is finite for every value of $\tau_e$ including $\infty$.
\end{proof}

\begin{proof} [Proof of the finite gradient update \ref{prop:convergence}]
Since we know that $\bva(\tau) \in L^1(0,\tau_e)$ from the previous corollary and that $\frac{\partial f(\zz(\tau),\theta)}{\partial \theta}$ bounded from the assumption, it is now trivial to show 
\begin{align}
    \frac{\partial L(\zz_{t_L})}{\partial \theta} & = \int_0^{\tau_e} \bva(\tau) \frac{f(\zz(\tau),\theta)}{\partial \theta} d \tau 
    \nonumber \\
    & \leq 
    \sup_{\zz(\tau) \in T} \norma{\frac{\partial f(\zz(\tau),\theta)}{\partial \theta}} \int_0^{\tau_e} \bva(\tau) d\tau,
\end{align}
which is finite. 
\end{proof}
This property allows us to backpropagate the dynamics without exploding gradients as long as $\gamma$ is chosen large enough. 
The exponential scaling of the adjoints also indicates that once we identify the point where the loss function will be calculated, we only need to consider a handful of points before $\bva(\tau)$ essentially vanishes. Any further adjoint propagation does not significantly contribute to the gradient update. 
This is intuitively desirable, as for the noisy equation \eqref{eq:biased_dyn} the loss function information becomes diluted as we backpropagate. 
Keeping only recent data points thus introduces a natural cutoff to the information we use for optimization. 

The theorem also gives more insight into when such backpropagation may lead to exploding gradients. 
If the expression $\left( \Delta \tau\lambda_{min} + \gamma \right) = \epsilon < 0$, then the upper bound may not hold, and gradients can increase exponentially. 
Strongly negative $\lambda_i$ of the hessian indicates a concave part in the potential landscape, which is generally problematic for control. 
However, with $\Delta \tau$ and $\gamma$, we have two robust dials to  ensure non-exploding adjoints.

In practice, we assume $\Delta \tau$ to be equal to the forward timestep. 
Investigating the impact of setting $\Delta \tau$ to multiples of the timestep is beyond the scope of this paper.

\subsection{Mini-batching the graph}
One of the problems associated with differentiable simulation is the low number of updates. 
Usually, only one gradient step is taken per trajectory, making the gradients averaged across the entire path and necessitating rather large learning rates to train the network in just a few updates. 
The problem can be alleviated by a technique we call \emph{graph mini-batching}. 
The idea is to calculate trajectory depended gradients first (the adjoints $\bva$) in one pass and then split them to mini-batches. 
The adjoints are then used as vectors in Jacobi-vector products~\eqref{eq:backprop_jvp} during backpropagation of the bias function evaluated in batches. 
The approach stabilizes learning and allows for much lower learning rates, better suited for training neural networks. 
An example of a use case is more thoroughly discussed in \cref{ape:graph_minibatch}.

\subsection{Summary} 
The use of the Langevin thermostat with reasonable $\gamma$ creates finite memory dynamics and therefore decaying adjoints. Employing also the \textit{.detach()} operator ensures that the adjoints vanish smoothly, without high frequency oscillations, thus making them bounded (solving \cref{prob:bound}) and ignoring fast motion, helping with \cref{prob:multi}.
Such a finite memory system is likely to be less chaotic, addressing \cref{prob:chaos}.
Splitting the loss gradient into random mini-batches obviously solves the point~4.

\section{Practical implementation}

It is important to promote the transition across the barrier equally. 
If only one direction is sampled, then one may end up with a "landslide" potential strongly tilted towards one minimum and not a diffusive behaviour. 
Therefore, we choose the following approach.
\begin{enumerate}
  \item Create a batch of $2l$ starting configurations, with $l$ in each well respectively.
  
  \item Run all trajectories simultaneously for a fixed number of time steps.
  
  \item Collect the loss \ref{eq:actual_loss} after all simulations have ended. 
  After $N$ initial steps, which serve as equilibration, we also calculate the minimal distance from the start to encourage the eventual return to the starting well and, thereby, true diffusive behavior. 
  Thus, we have two losses: A forward loss $L_f(\xx_{L_f})$ and start loss $L_s(\xx_{L_s})$ that are summed together with equal weights.
  
  \item Calculate adjoints and optimize the bias function $B(\xx, \theta)$ using the graph mini-batching technique. 
  Repeat from step 1 until convergence. 
\end{enumerate}
By running a large number of simulations concurrently, one can leverage the vectorization of the operations and reduce computational time.


\begin{figure*}[tb]
\begin{subfigure}[b]{0.3\linewidth}
    \centering
  \includegraphics[width=1.0\linewidth]{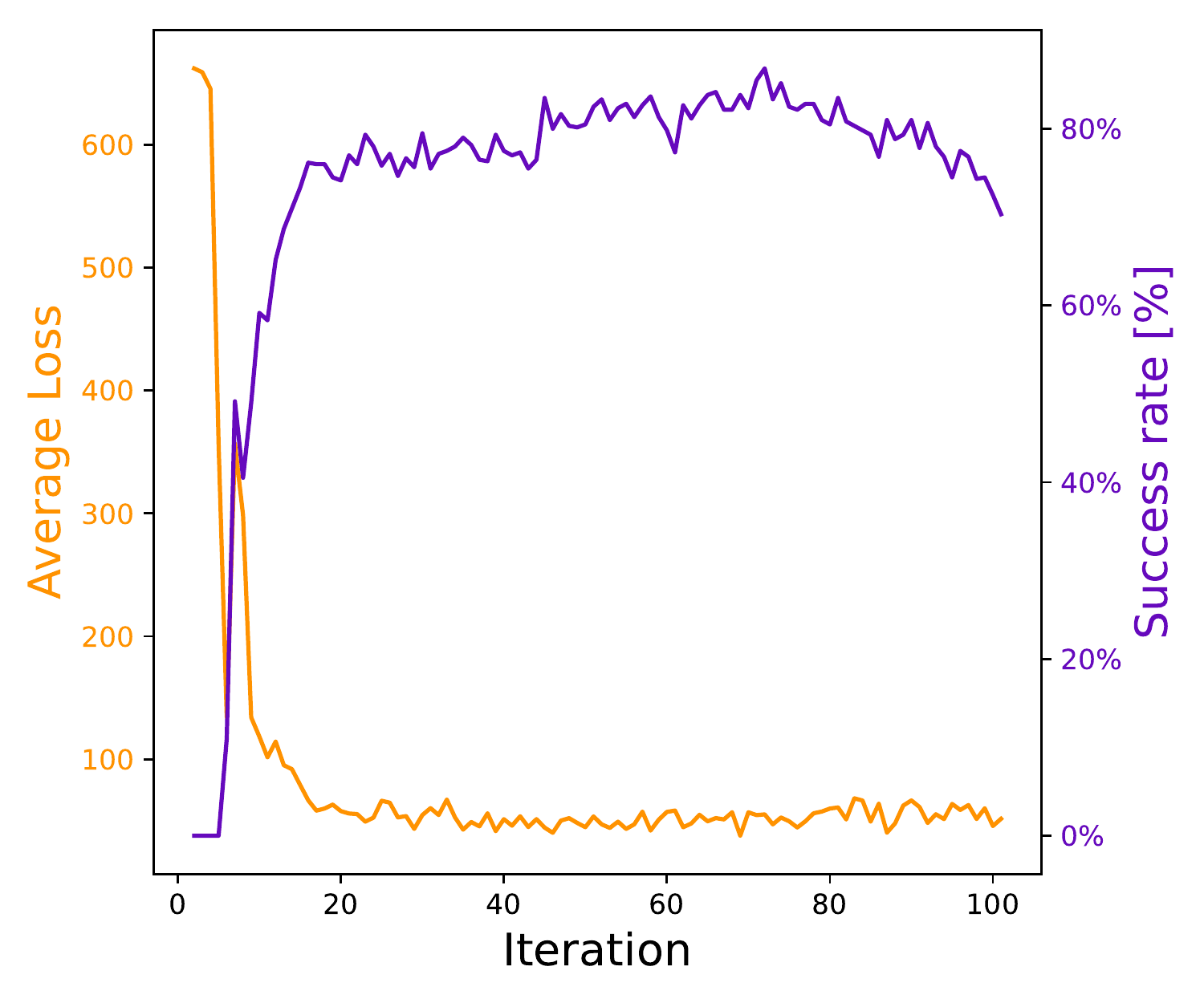}
  \phantomsubcaption
  \label{fig:2dsucrate}
\end{subfigure}
\centering
  \begin{subfigure}[b]{0.3\linewidth}
  \centering
  \includegraphics[width=1.0\linewidth]{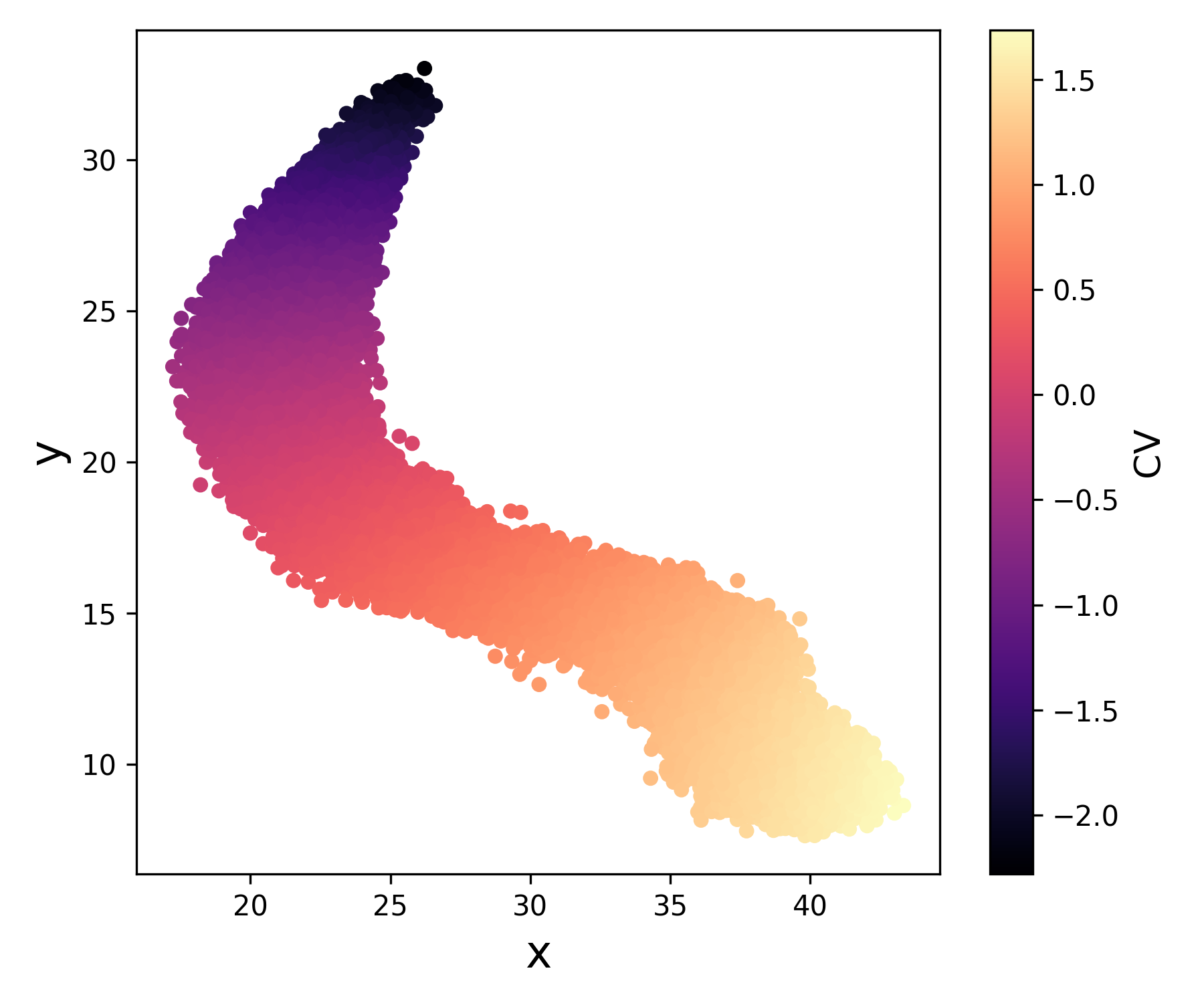}
  \phantomsubcaption
  \label{fig:2dcvs}
\end{subfigure}%
  \begin{subfigure}[b]{0.3\linewidth}
  \centering
  \includegraphics[width=1.0\linewidth]{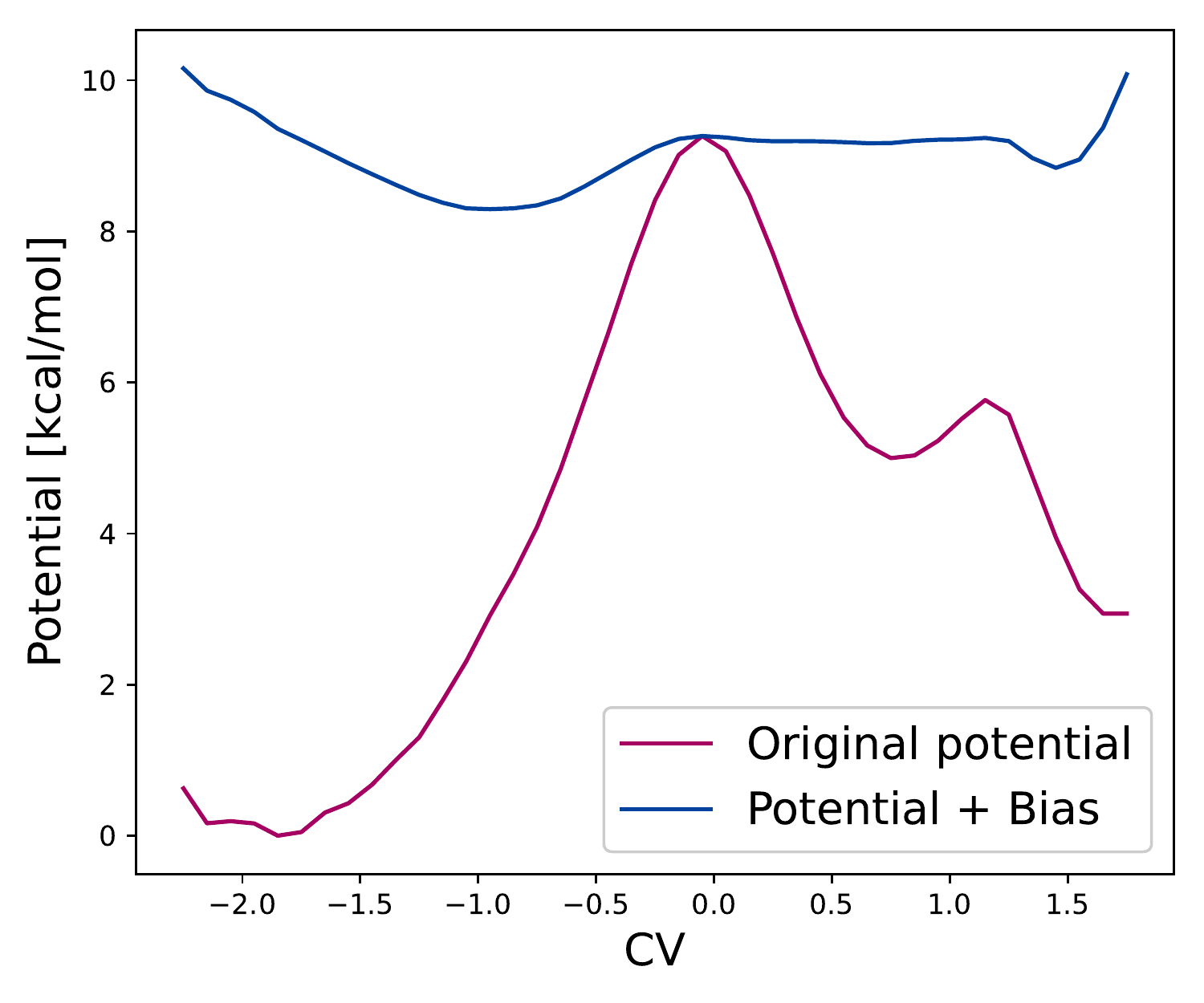}
  \phantomsubcaption
  \label{fig:2dpot}
\end{subfigure}
\vspace*{-5mm}
\caption{Postprocessing of the converged trajectory. \emph{left:} Loss functions and the probability of barrier crossing during the training progresses. \emph{middle:} Variational Autoencoder producing a collective variable by training on a fully diffusive trajectory. \emph{right:} Potential energy along the VAE collective variable with and without bias.}
\label{fig:2dresults}
\end{figure*}

\section{Results}

In this Section, we present the results of our novel DiffSim approach. 
First, we apply it to a commonly used two dimensional model PES, the Muller-Brown potential\cite{Muller1979LocationProcedure}, where any linear combination of the Cartesian coordinates does not yield a good CV. 
Then we lift this example to five dimensions by introducing three noisy DoFs demonstrating the efficiency of the approach in a higher-dimensional setup. 
Second, we investigate the benchmark system for enhanced sampling in molecular systems, alanine dipeptide (amino acid alanine capped at both ends). 
The two collective variables describing the metastable states are well known in the biophysics community, the backbone dihedrals $\phi$ and $\psi$. 
We will assume no such knowledge and generate the enhanced sampling simulation from all backbone dihedral angles as candidates in an end-to-end process.

\begin{figure}[bt]
\includegraphics[width=1.0\linewidth]{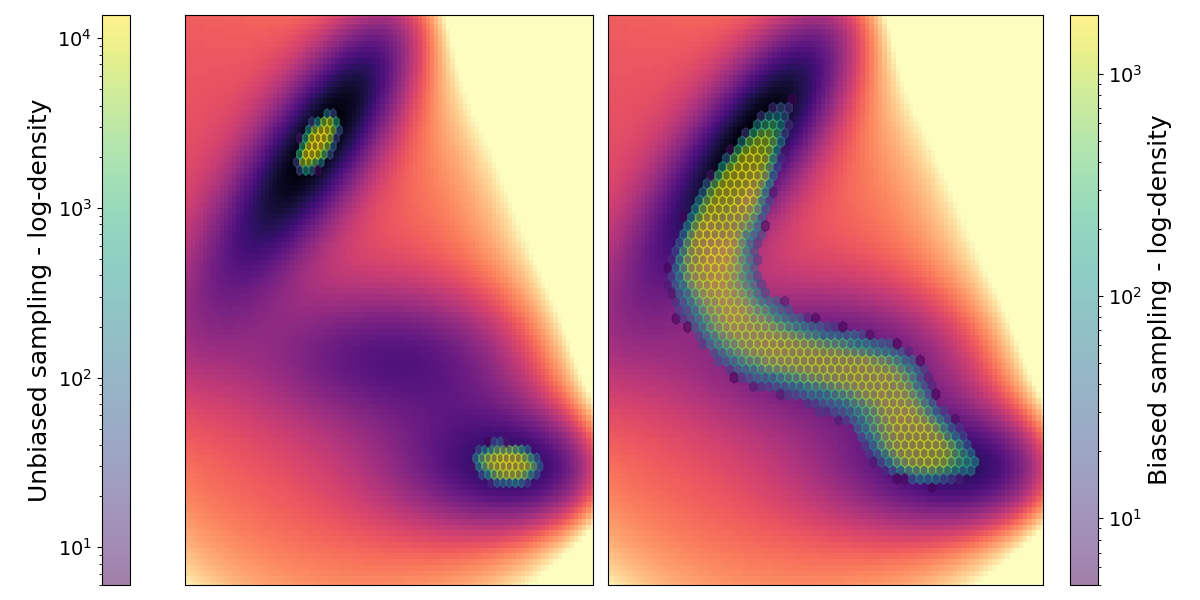}
\caption{Log-density of simulated points before (\emph{left}) and after the training  (\emph{right}) of bias function by differentiable simulations. 
The right plot shows how well all important regions are sampled after training. The background of the Figure is the $U_\mathrm{MB}(x,y)$, the underlying Muller-Brown potential.}
\label{fig:2ddensity}
\end{figure}

\subsection{2D Muller-Brown potential} \label{sec:2dexample}
The parameters of the commonly investigated 2D Muller-Brown PES \cite{Muller1979LocationProcedure, Sun2022MultitaskEvents} are given in the \cref{ape:mb}. 
For the bias potential, $B(\xx, \hh)$, we employ a grid of Gaussian functions, controlling their individual height. 
The biasing function is 
\begin{equation}
    B(\xx, \hh) = \sum_{i=1}^{n_g^2} h_i \exp \left( -\frac{\left(\xx - \xx_i^0\right)^2}{2 \sigma^2} \right)
\label{eq:gaussbias}
\end{equation}
with trainable $\hh$. 
Means $\xx_i^0$ are evenly distributed in the computational domain. 
With $n_g$ Gaussians along each dimension, only $n_g^2$ contributions to the total bias have to be calculated in two dimensions. 
After training the bias via DiffSim (parameters reported in \cref{ape:diffsimparams}), we obtain biased dynamics that generate increasingly many successful transitions between reactants and products along the transition path (see the evolution of the loss function and success rate during the training shown in \cref{fig:2dresults}). 
This leads to the log-density of the points along the transition path to even out significantly (see \cref{fig:2ddensity}).

We construct the CV by dimensionality reduction of frames from converged diffusive trajectories (well sampled transitions). 
To obtain a one dimensional CV describing the path, we use a Variational Autoencoder \cite{Kingma2013Auto-EncodingBayes} (architecture described in \cref{ape:diffsimparams}). 
The resulting CV is visualized in  Figure~\ref{fig:2dresults}. 
The CV distinguishes well and interpolates smoothly between products and reactants. 
Using this CV, the unbiased and biased PES are plotted as averages along the CV. 
It is easy to see in \cref{fig:2dresults} how effectively the PES has been flattened by the bias function.

\subsection{5D Generalization of Muller-Brown potential}
The situation is more complicated when additional harmonic degrees of freedom are included (see \cref{ape:general} for details). 
One may consider them to be, e.g., quickly oscillating hydrogen atoms that do not influence the reaction. 
As the ansatz of bias potential \eqref{eq:gaussbias} scales exponentially with the dimensionality of the problem, it cannot be used with the 5D version of the potential (\cref{ape:general}).
Instead, a fully connected neural network as a function of all five variables is employed, making training significantly harder. 
The results were postprocessed analogously as the two dimensional case, see Figure \ref{fig:5dresults}. 
Finally, the biased system converges to a success rate of 65~\%. 
As before, we observe that the potential was relatively flattened and transitions occur with high probability. 
The results are not as good as in the 2D case, both due to the built-in noisiness of the dynamics (\cref{ape:general}) and because of a rather crude approximation of the biasing potential with a simple fully connected neural network, which is harder to train than the Gaussian grid used in a 2D case. However, even under such circumstances, the DiffSim approach managed to discover the transformation path and sample it with reasonably high probability.

\begin{figure}
\centering
\begin{subfigure}[b]{0.9\linewidth}
\centering
\includegraphics[width=1.0\linewidth]{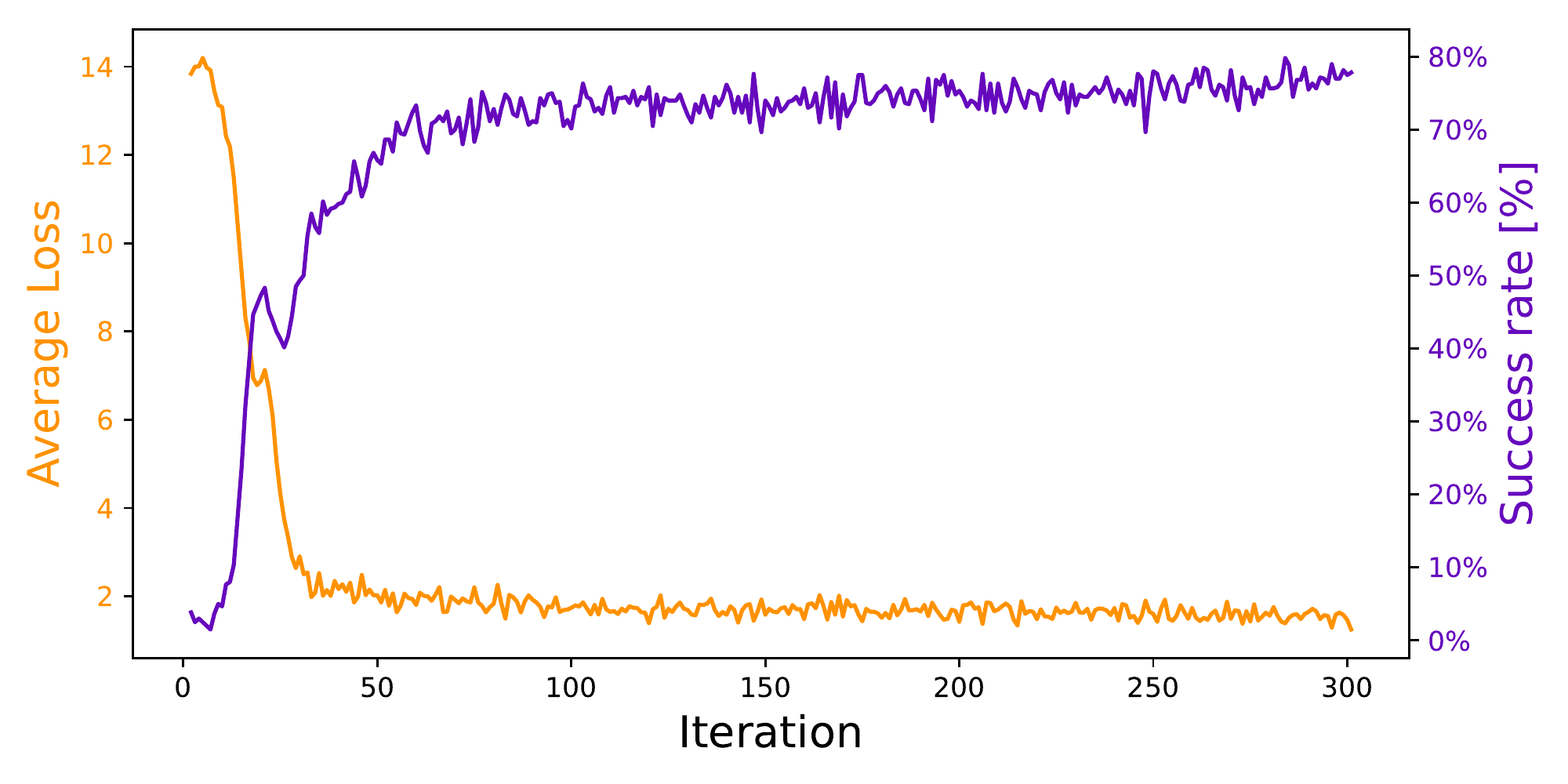}
\end{subfigure}
\begin{subfigure}[b]{0.9\linewidth}
\centering
\includegraphics[width=1.0\linewidth]{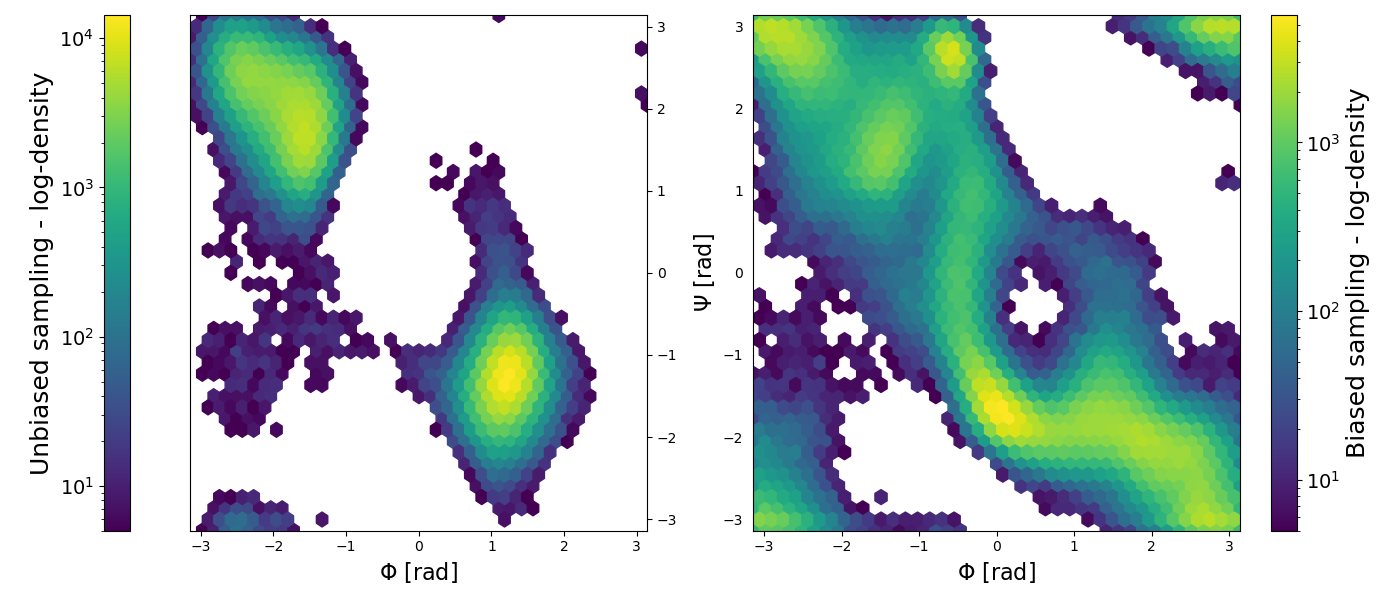}
\end{subfigure}
\caption{\emph{top row}: Metrics for the alanine dipeptide run. Total loss function and transition success rate. \emph{bottom row}: Log-density of simulated points before the training of bias by differentiable simulations (\emph{left}) and after the training converged (\emph{right}).}
\label{fig:ala2density}
\end{figure}

\subsection{Alanine dipeptide}
Alanine dipeptide is a simple model system exhibiting typical protein dihedral dynamics. 
Therefore, it has become an important benchmark to test and verify free energy calculation methods. The collective variables, the dihedral angles $\phi$ and $\psi$, are well known and the PES is rather complex with relatively low barriers \cite{Vymetal2010MetadynamicsStudy, Mironov2019ADipeptide}. We use this system to test the ability of our method to bias the dynamics along the important DoFs. The details about the location of minimas and biasing function inputs are reported in \cref{ape:diffsimparams}.

The progress of the training and log-density of points are reported in \cref{fig:ala2density}.
After training almost 80~\% of trajectories show a transition within 10~ps. 
By comparing the averaged bias potential with a potential of mean force (PMF) obtained using the reference metadynamics run (see \cref{ape:alaparams}) we can see that their shapes are similar (see \cref{fig:compareala}). 
This demonstrates that our differentiable simulations can unravel the transition paths and reaction barriers in the same way as a collective variable based method would, except without requiring prior knowledge of ideal CVs. 

\begin{figure}
\includegraphics[width=0.55\linewidth]{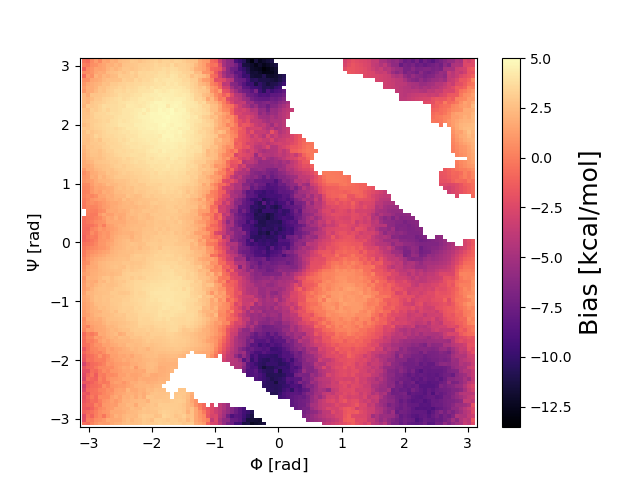}
\includegraphics[width=0.44\linewidth]{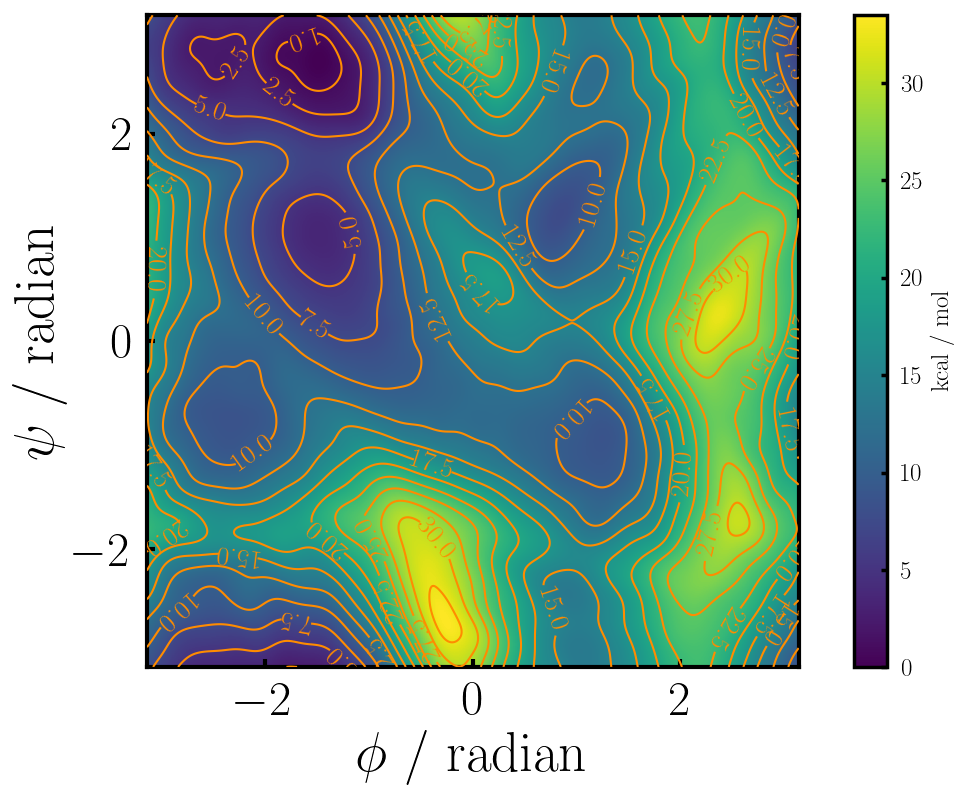}
\caption{\emph{left:} Average bias potential projected on the Ramachandran plane. 
White regions are without sufficient sampling to calculate bias potential.
\emph{right:} PMF of the $\phi$-$\psi$-plane.   }
\label{fig:compareala}
\end{figure}

\section{Conclusion}

This contribution presents advances in two distinct areas. 
First, it was described in detail how neural network controlled differential simulations (DiffSims) can be made robust and efficient. 
We have shown how the use of Langevin dynamics creates a finite memory horizon and therefore enforces decaying adjoints. The \textit{.detach()} operator pruned the computational graph and thereby ensured that the adjoints vanish smoothly, removing any high frequency oscillations.
The introduction of random mini-batches by breaking up the loss gradient made learning in the fashion of stochastic gradient descent possible, significantly stabilizing the training.

Second, the establishment of the robust and efficient neural network controlled DiffSims allowed us to successfully tackle an important open problem in computational chemistry - discovery and effective sampling of the rare event (chemical reaction) pathways.
Initially, a path integral loss was defined to measure the success of a molecular dynamics trajectory with regard to a crossing of an energy barrier, i.e., with regard to exhibiting a rare event.
This loss was then used to train a bias potential to discover and accelerate the chemical transitions without a need to guess a low-dimensional representation of the chemical reaction, i.e., the collective variable (CV), \textit{a priori}.
We showed the effectiveness of this approach by successfully biasing the dynamics on the Muller-Brown potential, which is a numerical benchmark for traditional enhanced sampling schemes and \emph{a priori} CV determination algorithms. Our method worked without any previous knowledge of the good CV, however, from the the biased trajectories exhibiting transitions a reduced representation, i.e., a CV,  can be constructed.
The quality of biasing and subsequent CV identification was practically perfect for the 2D case, and even the challenging 5D case with a significant amount of noise in the added harmonic DoFs converged, exhibiting a high probability of observing a transition event. 
Finally, a realistic chemical system (alanine dipeptide) was investigated.
The bias potential was constructed  considering dihedral angles as candidate degrees of freedom, including not only the two dihedrals commonly used as CVs but also other dihedrals that noised the transition.
Our method successfully generated biased trajectories, which exhibited sought-for transitions between the two target minima with high probability.

We have demonstrated that differentiable simulations with our innovations can handle not only model systems but also complex molecular motions. In the future, we intend to extend the tool to more challenging reactions with complicated transition paths, such as protein motion and chemical reactions with multiple intermediate steps or competing reaction paths. 

\section*{Acknowledgements}

M.S. was supported by project No. START/SCI/053 of Charles University Research program.
J.C.B.D. is thankful for the support of the Leopoldina Fellowship Program, German National Academy of Sciences Leopoldina, grant number LPDS 2021-08. 
L.G. acknowledges the support of Primus Research Program of the Charles University (PRIMUS/20/SCI/004).
R.G.-B. acknowledges support from the Jeffrey Cheah Career Development Chair.
We thank Michal Pavelka for discussions regarding the nature of multiscale problems.

\bibliography{diffsimrare}
\bibliographystyle{icml2022}

\newpage
\appendix
\onecolumn
\section{Proof of adjoint convergence theorem} \label{ape:adjoint_proof}
To prove Theorem \ref{thm:adjoint_convergence} we need to state one more lemma. 
\begin{lemma} \label{lmm:eigenlemma}
Let $\xx \in \mathbb{R}^N$, $U(\xx)$ scalar, real, $C^2(\mathbb{R}^N)$ function. Consider a hessian computed at $\xx_0$: $\frac{\partial^2 U(\xx_0)}{\partial \xx^2}$ with minimum and maximum eigenvalues $\lambda_{min}$ and $\lambda_{max}$ respectively. Then for any vector $\bvv \in \mathbb{R}^N$
\begin{equation}
\begin{split}
    \lambda_{min} \norma{\bvv}^2 \leq \bvv^T \cdot \frac{\partial U(\xx_0^2)}{\partial \xx^2} \bvv \leq \lambda_{max} \norma{\bvv}^2.
\end{split}
\end{equation}
\end{lemma}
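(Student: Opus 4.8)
The plan is to invoke the spectral theorem for the Hessian and reduce the claim to the standard Rayleigh-quotient bound. The starting observation is that, because $U \in C^2(\mathbb{R}^N)$, the mixed second partial derivatives commute (Schwarz's theorem), so the matrix $H := \frac{\partial^2 U(\xx_0)}{\partial \xx^2}$ is real and symmetric. Hence $H$ admits an orthonormal eigenbasis $\{\bvv_1,\dots,\bvv_N\}$ with real eigenvalues, which we may order so that $\lambda_{min}$ and $\lambda_{max}$ are the smallest and largest among them.

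Next I would expand an arbitrary $\bvv \in \mathbb{R}^N$ in this eigenbasis, $\bvv = \sum_{i=1}^N c_i \bvv_i$ with $c_i = \bvv_i^T \bvv$. Orthonormality gives $\norma{\bvv}^2 = \sum_{i=1}^N c_i^2$, and since $H\bvv_i = \lambda_i \bvv_i$ together with $\bvv_i^T\bvv_j = \delta_{ij}$ we obtain $\bvv^T H \bvv = \sum_{i,j} c_i c_j\, \bvv_i^T H \bvv_j = \sum_{i=1}^N \lambda_i c_i^2$.

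The inequality then follows termwise: because $\lambda_{min} \le \lambda_i \le \lambda_{max}$ for every $i$ and each $c_i^2 \ge 0$, we have $\lambda_{min}\sum_i c_i^2 \le \sum_i \lambda_i c_i^2 \le \lambda_{max}\sum_i c_i^2$, i.e. $\lambda_{min}\norma{\bvv}^2 \le \bvv^T H \bvv \le \lambda_{max}\norma{\bvv}^2$, which is exactly the claim.

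There is no substantial obstacle here — the only point worth flagging is that the whole argument rests on $H$ being symmetric (hence orthogonally diagonalizable with real spectrum), which is precisely where the $C^2$ hypothesis is used; without it the Hessian need not even be diagonalizable. An alternative route is to observe that $\lambda_{min}$ and $\lambda_{max}$ are the minimum and maximum of the Rayleigh quotient $\bvv^T H \bvv/\norma{\bvv}^2$ over $\bvv \neq 0$ (the Courant--Fischer characterization), but the explicit eigenbasis expansion above is the most self-contained and is what I would write out.
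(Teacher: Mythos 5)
Your proof is correct and follows essentially the same route as the paper: the paper's own proof simply notes that the Hessian of a $C^2$ function is symmetric, hence orthogonally diagonalizable with real eigenvalues, and defers the Rayleigh-quotient bound to standard linear algebra texts, which is precisely the eigenbasis expansion you write out in full.
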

\begin{proof}
We note that a hessian of a real continuous function is a symmetric matrix. Such a matrix is orthogonally diagonalizable and has real eigenvalues. The rest of the proof is a part of most standard linear algebra textbooks.
\end{proof}
We can now prove the Theorem \ref{thm:adjoint_convergence}.
\begin{proof}
We start by multiplying \eqref{eq:reduced_adjoint} by $2\bva$. This yields
\begin{equation}
    2\bva(\tau) \cdot \dot{\bva}(\tau) = {2 \mathrm{d} \tau} \bva^T(\tau) \cdot \frac{\partial^2 U(\xx(\tau))}{\partial^2 \xx} \bva(\tau) - 2 \gamma \norma{\bva(\tau)}^2
\end{equation}
and can be recast using $2 \bva^T(\tau) \cdot \dot{\bva}(\tau) = \dot{\overline{\norma{\bva(\tau)}^2}}$ (the norm is a standard vector 2-norm) to
\begin{equation}
     \dot{\overline{\norma{\bva(\tau)}^2}} = - {2 \mathrm{d} \tau} \bva^T(\tau) \cdot \frac{\partial^2 U(\xx(\tau))}{\partial^2 \xx} \bva(\tau) - 2 \gamma \norma{\bva(\tau)}^2
\end{equation}
Using Lemma \ref{lmm:eigenlemma} and, subsequently, the assumption of the theorem, we can estimate the upper bound of the time derivative as
\begin{equation}
    \dot{\overline{\norma{\bva(\tau)}^2}} \leq - 2  \left( {\mathrm{d} \tau} \lambda_{min} + \gamma \right) \norma{\bva(\tau)}^2 = -2 \epsilon \norma{\bva(\tau)}^2
\end{equation}
Using Gromwall lemma we can now estimate $\bva(\tau)$ easily as
\begin{equation} \label{eq1}
\begin{split}
   \norma{\bva(\tau)}^2 \leq \norma{\bva(0)}^2 \operatorname{exp} \left(- 2\int_0^\tau \epsilon\ \mathrm{d} t \right) = \norma{\bva(0)}^2 e^{-2\epsilon \tau}
\end{split}
\end{equation}
and since $\epsilon > 0$, the $\norma{\bva(\tau)}^2$ is bounded for all $\tau$.
\end{proof}
\section{Graph minibatching and adjoints} \label{ape:graph_minibatch}
To better explain the graph minibatching technique, let us consider a simple differential equation with trainable parameters $\theta$
\begin{equation}
    \dot{z} = f(\zz, \theta)
\end{equation}
Let us discretize the equation using a simple Forward Euler method such that it becomes
\begin{equation}
z_{n+1} = z_n + dt f(z_n, \theta).
\end{equation}
For simplicity consider a three step differentiable simulation $(z_0, z_1, z_2)$ such that 
\begin{align*}
z_2 = z_1 + dt f(z_1, \theta) \\
z_1 = z_0 + dt f(z_0, \theta)
\end{align*}
where a loss function is defined for the last point $L(z_2)$. Our goal is to find the gradient of $\frac{\partial L(z_2)}{\partial{\theta}}$. Let us derive
\begin{align*}
\frac{\partial L(z_2)}{\partial \theta} &= \frac{\partial L(z_2)}{\partial z_2} \frac{\partial z_2}{\partial \theta} \\
\frac{\partial z_2}{\partial \theta} &= \frac{\partial z_1}{\partial \theta} + dt \frac{ \partial f(z_1,\theta)}{\partial \theta} = \frac{\partial z_1}{\partial \theta} + dt \left( \frac{ \partial f(z_1,\theta)}{\partial z_1} \frac{\partial z_1}{\partial \theta} + \frac{\partial f(z_1,\theta)}{\partial \theta} \right)  \\
\frac{\partial z_1}{\partial \theta} &= \frac{\partial z_0}{\partial \theta} + dt \frac{ \partial f(z_0,\theta)}{\partial \theta} = dt \frac{ \partial f(z_0,\theta)}{\partial \theta}.
\end{align*}
Put together,
\begin{equation}
\frac{\partial L(z_2)}{\partial \theta} = \frac{\partial L(z_2)}{\partial z_2} \left[ dt \left(1 + dt \frac{ \partial f(z_1,\theta)}{\partial z_1} \right) \frac{ \partial f(z_0,\theta)}{\partial \theta} + dt \frac{\partial f(z_1,\theta)}{\partial \theta} \right].
\end{equation}
Meaning, when we optimize the biased function $f(z_n, \theta)$ We can split the derivative into two parts
\begin{equation}
\begin{split}
& \left[ \frac{\partial L(z_2)}{\partial z_2} dt \left(1+ dt \frac{ \partial f(z_1,\theta)}{\partial z_1} \right) \right] \frac{ \partial f(z_0,\theta)}{\partial \theta} \\
& \left[ \frac{\partial L(z_2)}{\partial z_2} dt \right] \frac{\partial f(z_1,\theta)}{\partial \theta}
\label{eq:jvpgrads}
\end{split}
\end{equation}
More steps can be obtained by continuing the iterations. One can easily see that the vectors we put into square brackets are actually the adjoints $a(z_n)$ from \eqref{eq:adjoint_tau}. By saving these vectors, we can then take $z_n$, feed forward through $f(z_n, \theta)$ and backpropagate using the vector jacobian product. This can be done in one gradient update, accumulating a gradient with respect to $\theta$ and updating it after going through all adjoints, or we can update weights in batches as it is common in neural network training. The latter is shown to be the more stable and faster converging of the methods (see Figure \ref{fig:comparebatched}). 

\begin{figure}
\centering
\begin{subfigure}{0.4\textwidth}
  \centering
  \includegraphics[width=1.0\linewidth]{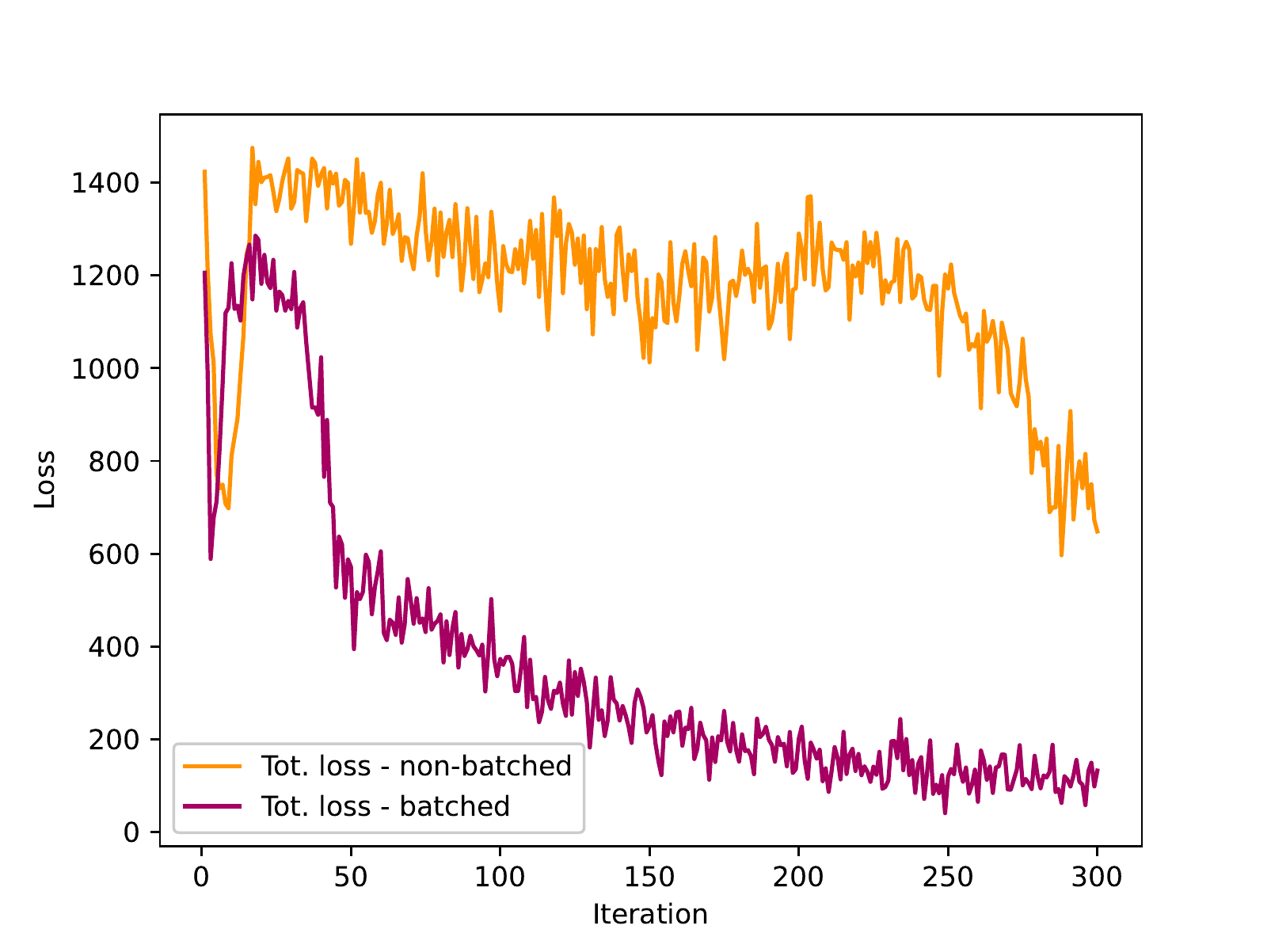}
\end{subfigure}%
\begin{subfigure}{0.4\textwidth}
  \centering
  \includegraphics[width=1.0\linewidth]{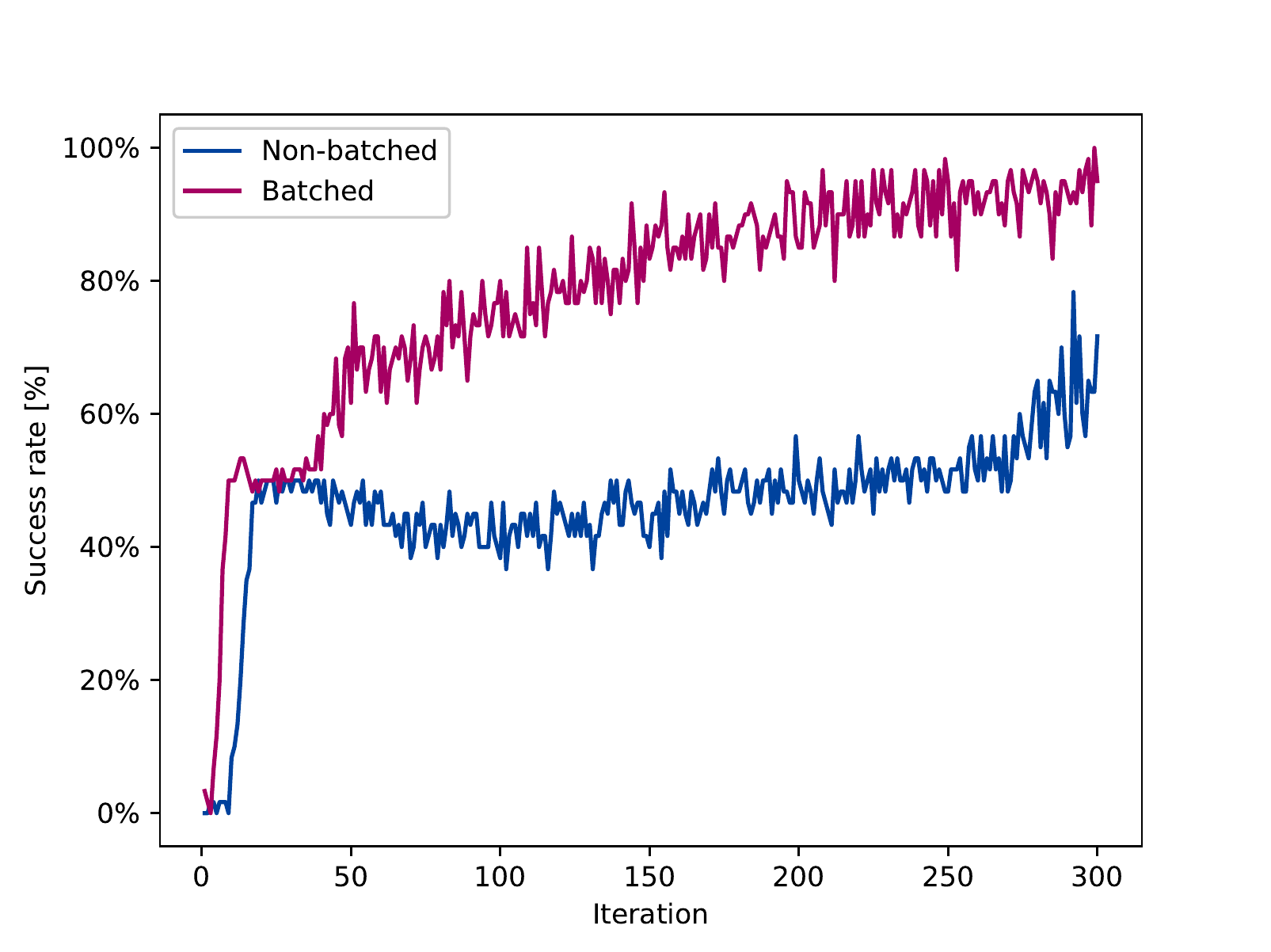}
\end{subfigure}
\caption{Comparison of the batched and one-time update of the weights in the 2D example from \ref{sec:2dexample}. The learning rate for the unbatched example was set approximately a number of batches times larger than for the batched run. The convergence is clearly more stable and even faster in the batched case. This was also observed for any other setting we tried during the development.}
\label{fig:comparebatched}
\end{figure}

\section{2D Muller-Brown potential} \label{ape:mb}
The equation of the PES:
\begin{equation}
    U_\mathrm{MB}(x,y) = B \sum_{i=1}^4 A_i \operatorname{exp}\left[\alpha_i (x-x_0)^2 + \beta_i (x-x_0)(y-y_0) + \gamma_i (y-y_0)^2 \right]
\end{equation}
The parameters used in this work are:
\begin{center}
\begin{tabular}{||c c c c c c c||} 
 \hline
 $i$ & $A_i$ & $\alpha_i$ & $\beta_i$ & $\gamma_i$ & $x_0$ & $y_0$ \\ [0.5ex] 
 \hline\hline
 1 & -1.73 & 0 & -0.39 & -3.91 & 48 & 8 \\ 
 \hline
 2 & -0.87 & 0 & -0.39 & -3.91 & 32 & 16 \\
 \hline
 3 & -1.47 & 4.3 & -2.54 & -2.54 & 24 & 31 \\
 \hline
 4 & 0.13 & 0.23 & 0.273 & 0.273 & 16 & 24 \\
 \hline
\end{tabular}
\end{center}
The barrier parameter $B=10 \ \text{kcal/mol}$

\section{5D Generalization} \label{ape:general}

We consider a generalization of the Muller-Brown potential. 
By adding three harmonic DoFs we complicate the problem and make it necessary to use a general form of a biasing potential, dependent on all degrees of freedom, as we do not know which of them defines the reaction.
The resulting potential has the form:
\begin{equation}
    U_{5D}(x_1, x_2, x_3, x_4, x_5) = U_\mathrm{MB}(x_1, x_3) + \kappa (x_2^2+x_4^2+x_5^2)
\end{equation}
The parameters for $U_\mathrm{MB}(x_1, x_3)$ are identical to the 2D-case, the new parameter  $\kappa = 0.1$
The results for the 5D case are visualized in the figure \cref{fig:2dresults}.

\begin{figure}
\begin{subfigure}[b]{0.3\linewidth}
    \centering
  \includegraphics[width=1.0\linewidth]{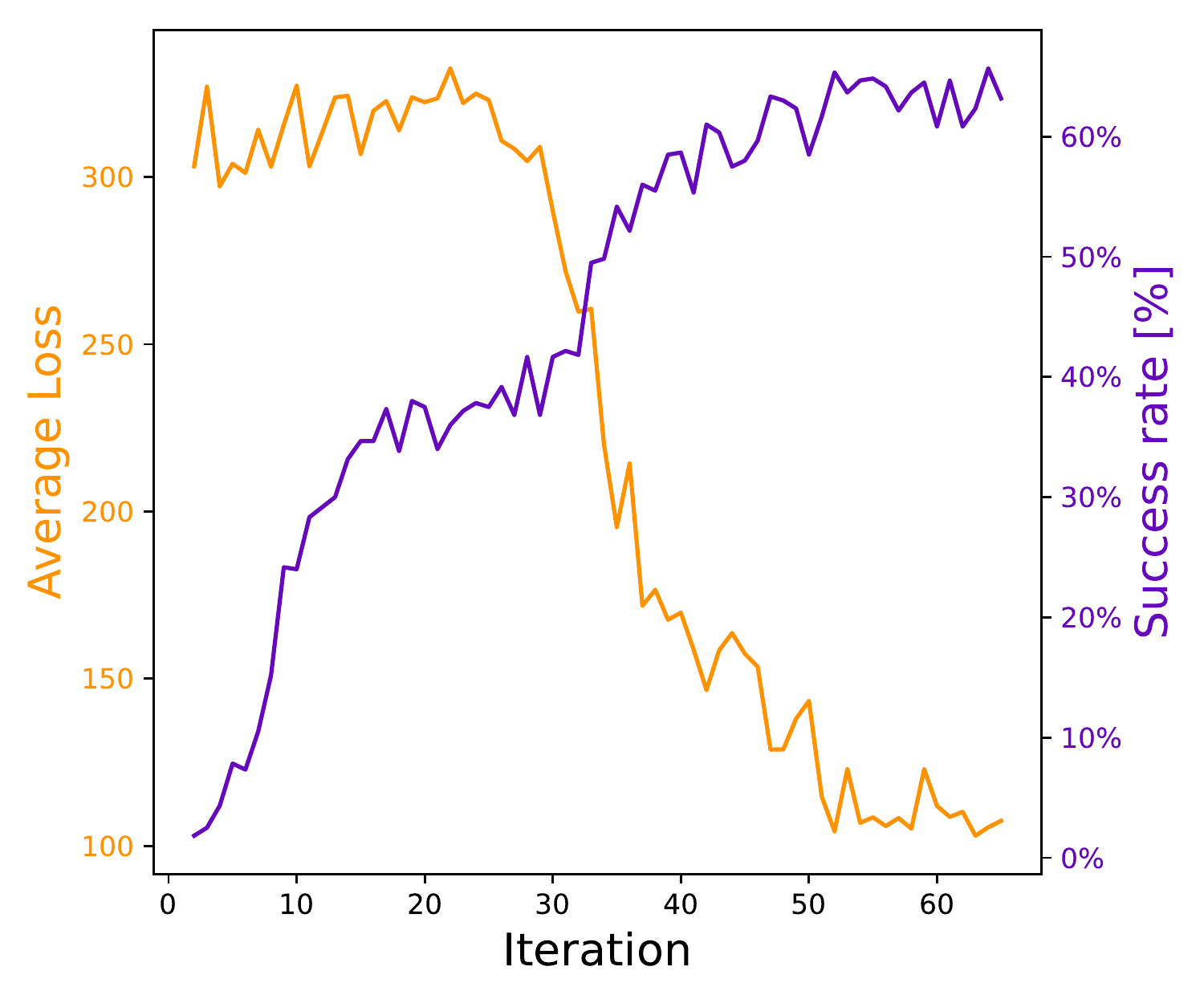}
  \phantomsubcaption
  \label{fig:5dsucrate}
\end{subfigure}
\centering
  \begin{subfigure}[b]{0.3\linewidth}
  \centering
  \includegraphics[width=1.0\linewidth]{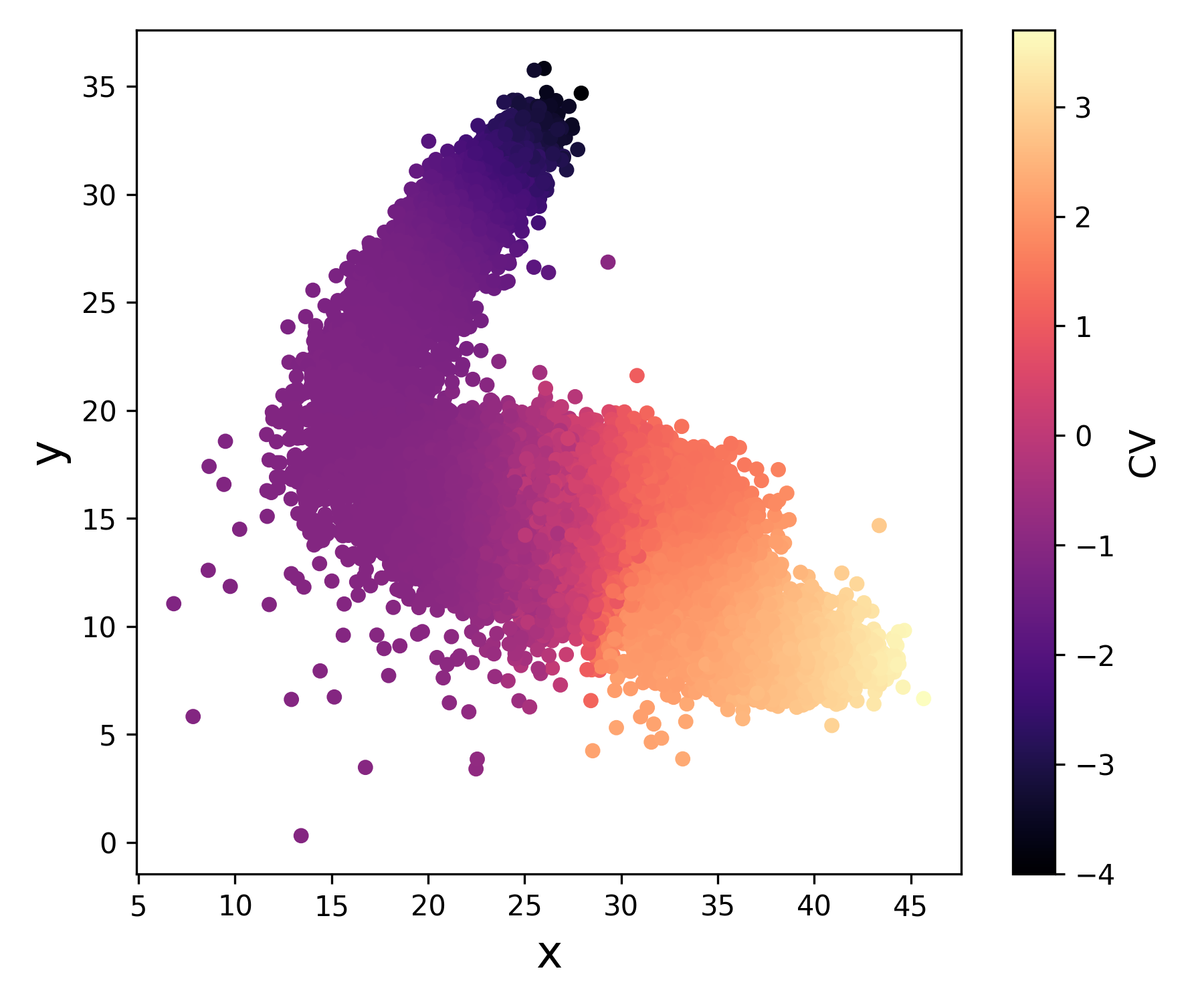}
  \phantomsubcaption
  \label{fig:5dcvs}
\end{subfigure}%
  \begin{subfigure}[b]{0.3\linewidth}
  \centering
  \includegraphics[width=1.0\linewidth]{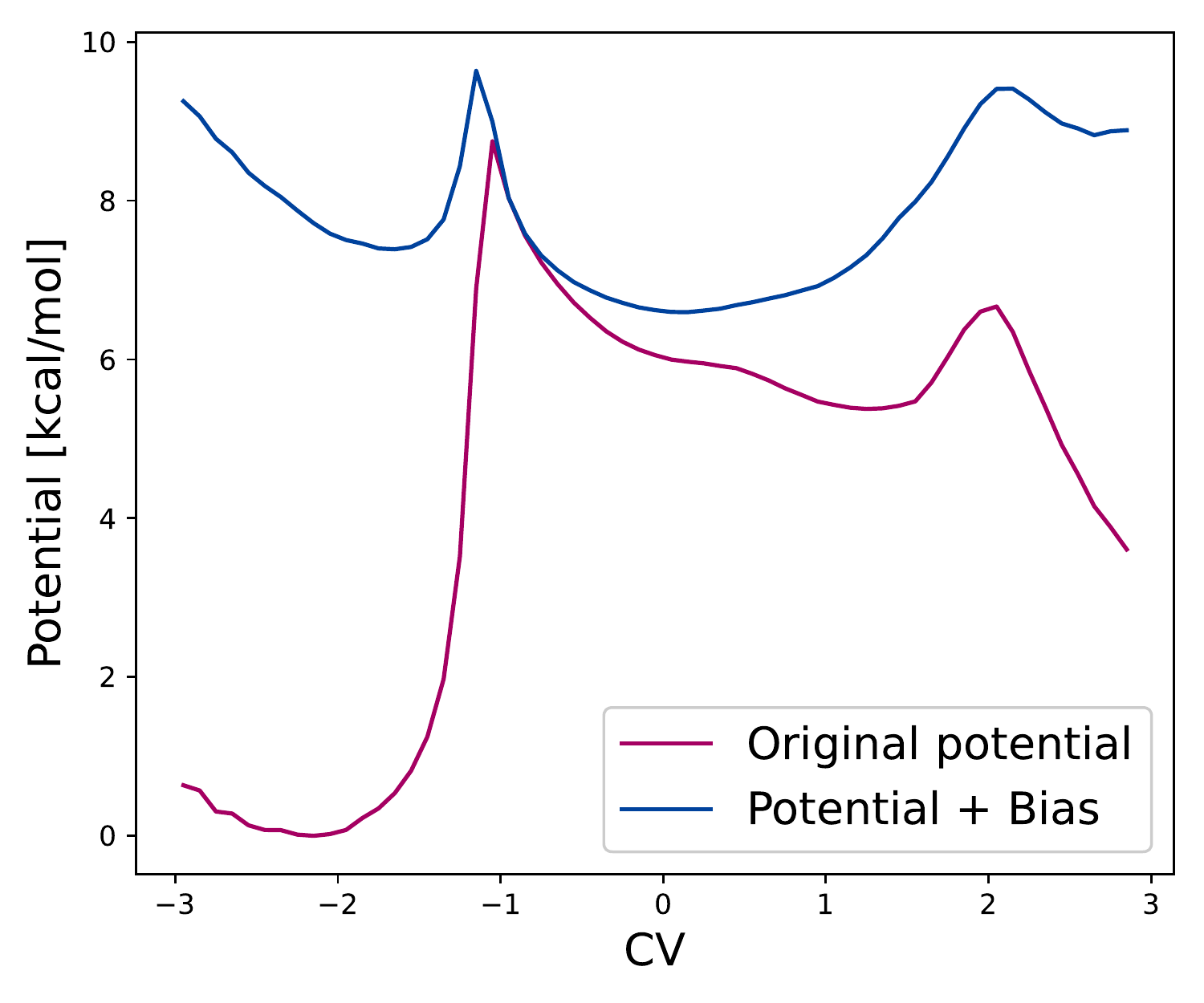}
  \phantomsubcaption
  \label{fig:5dpot}
\end{subfigure}
\vspace*{-5mm}
\caption{Results for the 5D extension of the Muller-Brown potential. 
\emph{left:} Evolution of loss value and probability of barrier crossing during the training progresses. 
\emph{middle:} CV determined with a Variational Autoencoder trained on a fully diffusive trajectory. The collective variables are not sharp around the transition region due to the high variance of the other noisy DoFs. 
This could be improved by more data, and more refined dimensionality reduction techniques that include temporal data such as e.g TiCA \cite{Schwantes2015ModelingTrick} or time-lagged autoencoders \cite{Wehmeyer2018Time-laggedKinetics}. \emph{right:} Average potential energy along the VAE collective variable with and without bias. The barriers were lowered to the level where they could be crossed with high probability.}
\label{fig:5dresults}
\end{figure}

\section{Alanine dipeptide simulation settings} \label{ape:alaparams}

The PMF of alanine dipeptide was obtained by means of well-tempered metadynamics\cite{Barducci2008Well-TemperedMethod}.
The deposited Gaussians had an initial height of 1~kcal/mol, a width of 10$^\circ$ along both dihedrals, and were deposited every 50~fs.
The WTMetaD temperature was 4000~K a
The simulation time step was 1~fs and the temperature was kept at 300~K with the Langevin thermostat with a friction constant of 1~ps$^{-1}$.
The total simulation time was 50~ns.

\section{Differentiable simulation parameters} \label{ape:diffsimparams}

The equations we simulate are \eqref{eq:biased_dyn}, discretized by the Leapfrog algorithm. The method is symplectic and conserves energy. The constants and parameters of the method were chosen as follows:
\begin{center}
\begin{tabular}{||c | c c c c c c||} 
 \hline
 case & $m$ [g/mol] & $\gamma$ [ps$^{-1}$] & $T$ [K] & d$t$ [fs] & timesteps & epochs \\ [0.5ex] 
 \hline
 2D & 0.1 & 0.1 & 10 & 1 & $6 \times 10^3$ & 101\\
 5D & 0.01 & 1.0 & 300 & 1 & $2 \times 10^4$ & 66\\
 Ala2 & - & 0.1 & 300 & 1 & $10^4$ & 301\\
 \hline
\end{tabular}
\end{center}
The column "timesteps" lists for how many steps we propagate a single simulation in each epoch. 
The update of parameters then represents an epoch. 
For the backward dynamics, we use 190 adjoints directly before the point where the loss function is calculated. 

\textbf{Batches of Parallel MDs}
These batches refer to the number of replicas that are simulated simultaneously. Using GPUs, we can parallelize the computation of forces and time step integration and thus are able to run 600 systems at once with a  similar speed of running just one. Accumulating the simulated data from so many systems allows us to increase the number of adjoints obtained and enables us to use a lower learning rate, making the training more stable. 

The setup of the bias function differed for every test case:

\textbf{2D Muller-Brown:} 
In this case, we use the setup described in \eqref{eq:gaussbias} with 50 times 50 basis functions. 

\textbf{5d Muller-brown:}
We use a fully connected network with all five degrees of freedom used as five continuous input neurons. The network has four hidden layers, each 150 neurons with SiLU as activation functions. The final layer has a single output neuron - the bias -  and no activation. 

\textbf{Alanine dipeptide:}
In the case of real molecules, the bias function gets more complicated. We define the Gaussian basis set in every single degree of freedom represented by a basis vector $\mathbf{e}_j$ and form a vector
\begin{equation}
    \bvv(\xx) = \sum_{j=1}^{n_{dof}} \sum_{i=1}^{n_g}  \operatorname{exp} \left( - \frac{(\xx-\xx_{ij}^0)^2}{2\sigma^2} \right) \mathbf{e}_j.
\end{equation}
$n_{dof}$ represents the total number of candidate CVs or degrees of freedom considered. In our case, this was 5. $n_g$ is the total number of basis functions defined separately for every candidate CV. In our case, this was 50 and since we described dihedral angles, centers $\xx_{ij}^0$ were distributed uniformly from 0 to $2\pi$, respecting the periodicity of dihedrals. 
The flattened vector $\bvv(\xx)$ with size $n_{dof} \cdot n_g$ is then used as an input to a fully connected neural network with three hidden layers, each 150 neurons with SiLU as an activation function. The final layer has one output neuron without an activation function. 

As reactant and product we use the minima $\beta$ ($\phi \approx -2,\ \psi \approx 2$)  and C$7_\mathrm{ax}$ ($\phi \approx 1,\ \psi \approx -1.5$), respectively (compare right panel in \cref{fig:compareala}).
As candidate DoFs, we choose the four dihedral angles along the backbone of alanine dipeptide, denoted as $\theta_1$, $\phi$, $\psi$, $\theta_2$ in the Figure 1 from the paper \cite{Mironov2019ADipeptide}. To make thinks more complicated, we add one dihedral involving the side-chain methyl group that is expected to be correlated with $\phi$, defined by atoms $C_1$, $N_1$, $C_{\alpha}$, $C_{\beta}$ using the notation from the same Figure. 
In each dihedral angle, a Gaussian basis set accounting for periodicity is defined \cref{ape:alaparams}. 
These expanded dihedrals are then input to a fully connected network that calculates the bias function. 
The detailed settings are listed in  \cref{ape:diffsimparams}. 
As $\phi$ and $\psi$ are known, the results are reported as Ramachandran plots. 
No additional dimensionality reduction via VAEs is performed in this example.

The numerical tool used for the DiffSim of alanine dipeptide was partially based on components from the TorchMD library \cite{Doerr2021TorchMD:Simulations}. 
The simulations were carried out with the Amber ff19SB forcefield \cite{Tian2020Ff19SB:Solution} in vacuum.

\textbf{Graph-Minibatching batch size}
This batch size refers to the mini-batching of the computational graph illustrated in \cref{ape:graph_minibatch}. 
Here we split the accumulated adjoints into smaller batches and train the network sequentially. 
The mini-batch of 120 was used for all systems. We use the learning rate as a learning factor divided by the number of replicas to make it independent of the number of systems simulated simultaneously. The learning factor is chosen as 20 for the 2D case, 6 for the 5D case and 3 for the Alanine dipeptide. This, with 300 replicas running from reactant to the product and 300 the other way, gives us learning rates on the orders $10^{-2}$ to $10^{-3}$. We use Adam optimizer for all our cases. 

For the CV construction via Variational Autoencoder a simple setup was employed with a two hidden layer encoder and a two hidden layer decoder with 50 neurons and a Softplus activation function for each hidden layer. 

\section{Brachistochrone curve}
\label{ape:brachisto}

\begin{figure}
\begin{subfigure}[b]{0.4\linewidth}
    \centering
  \includegraphics[width=1.0\linewidth]{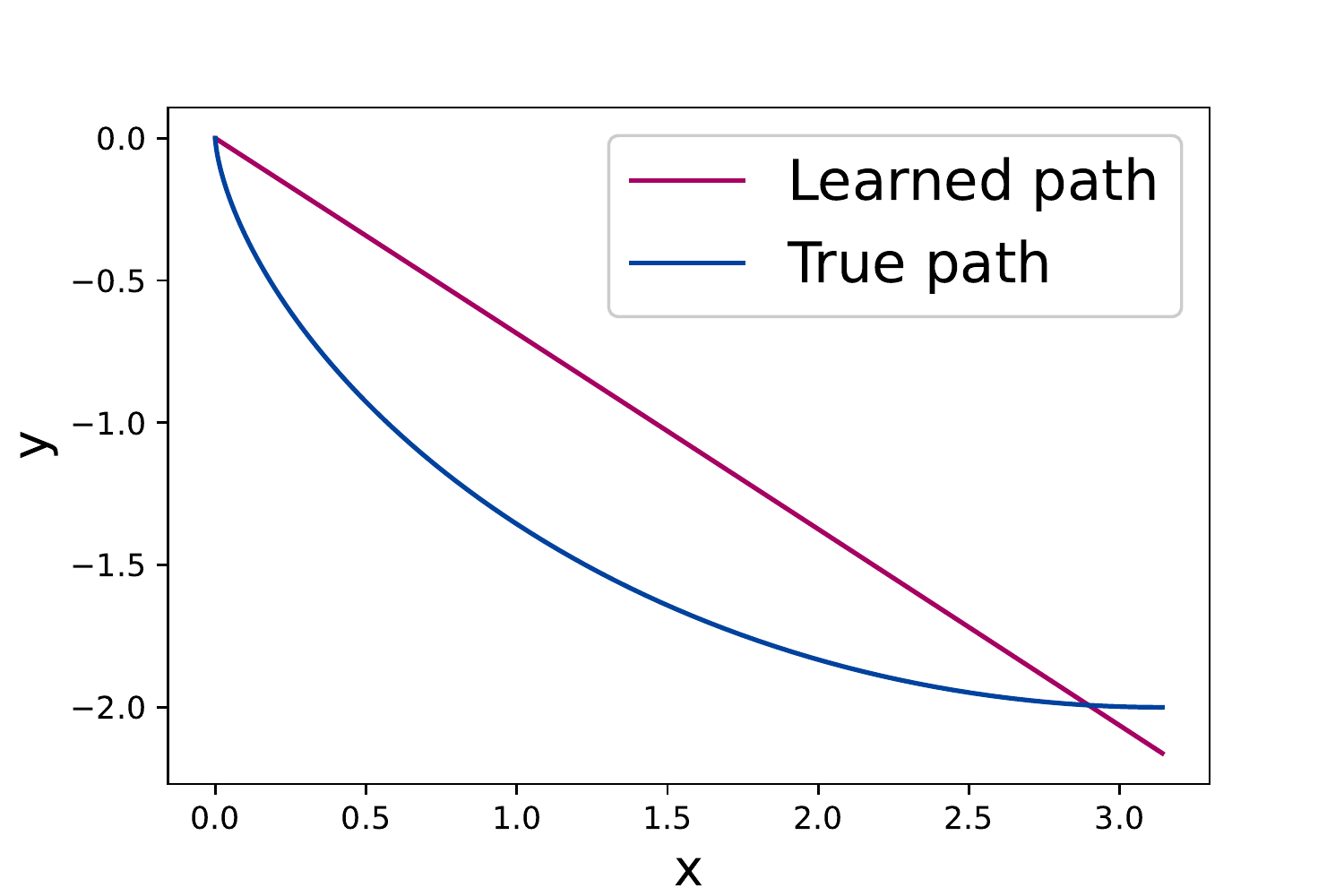}
  \phantomsubcaption
  \label{fig:bch0}
\end{subfigure}
\centering
  \begin{subfigure}[b]{0.4\linewidth}
  \centering
  \includegraphics[width=1.0\linewidth]{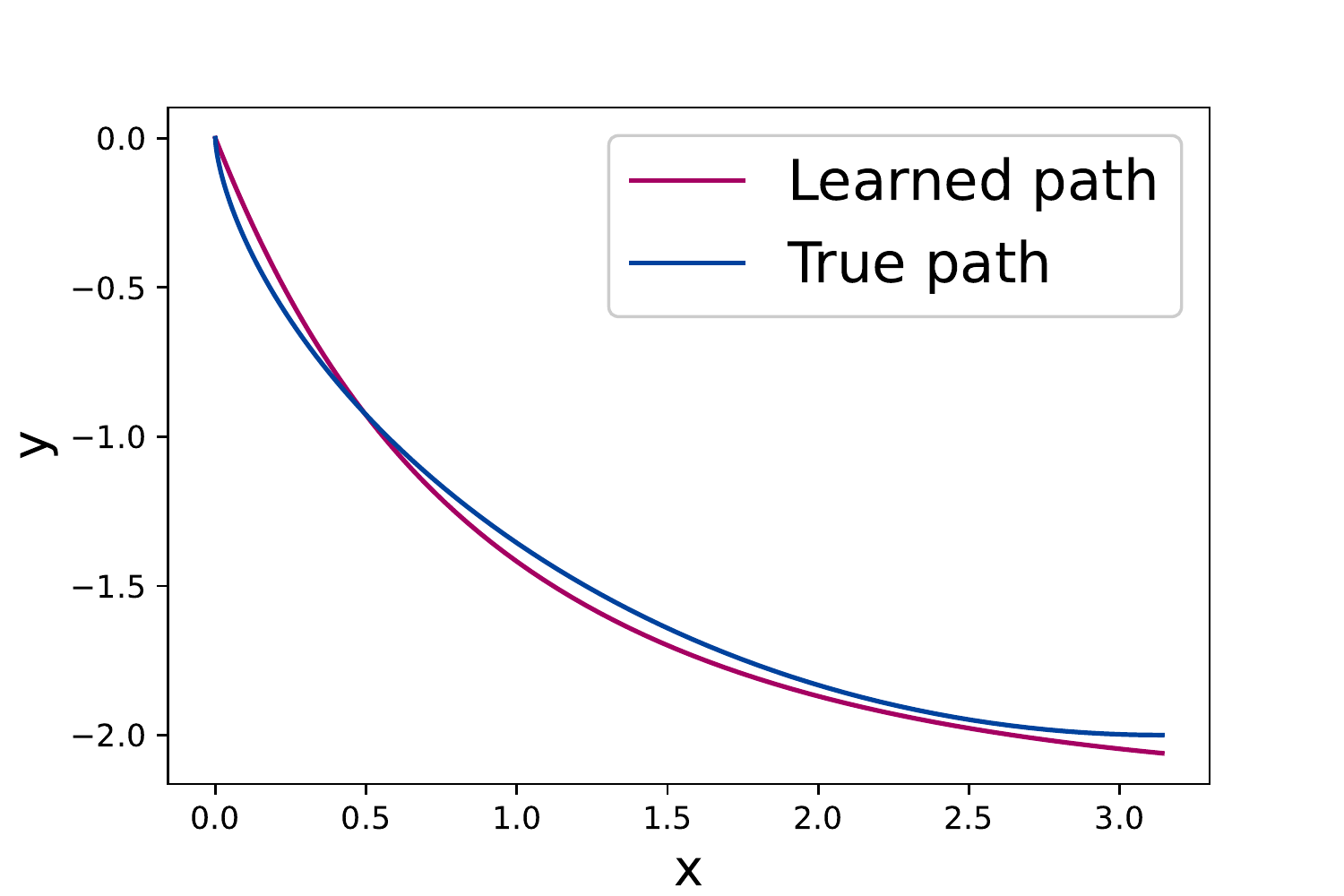}
  \phantomsubcaption
  \label{fig:bch200}
\end{subfigure}%
\vspace*{-5mm}
\caption{\emph{left} Initial state. The path is initialized as an almost straight path. \emph{right} After 200 iterations of differentiable simulations training, the path approximates the true path. The difference between the true curve and the one obtained by training is likely in the numerical scheme used to evaluate the integral.}
\label{fig:bchResults}
\end{figure}

Here we exemplify how one can employ differentiable simulations and their capabilities to optimize path dependent integrals and solve the Brachistochrone problem. The problem is formulated as follows:
Given a mass freely sliding on a curve $y = y(x)$  in the gravitational field $g$, find the curve from point \textbf{A} to lower point \textbf{B} for which the sliding time is the shortest. We assume no friction or air resistance and assume that \textbf{B} does not lie directly below \textbf{A}. 
For simplicity, we choose \textbf{A} to be the origin of the coordinate system.
The solution, the cyclone curve, of this famous problem was obtained by Leibniz, L'Hospital, Newton, and Bernoulli brothers \cite{Boyer1991AEdition}. A modified version, where we allow for an arbitrary difference in height between the two points and only prescribe their horizontal distance $\Delta x$ was solved by Lagrange and much later summarized and written in the modern language of variational formalism by \cite{Mertens2008BrachistochronesEnds}. In this case, a solution is also a cyclone with some parameters fixed. We prescribe the horizontal $\Delta x$ to be $\pi$ and search for a solution using differentiable simulations. A simple fully connected neural network $f(x)$ serves as a derivative of the curve $f(x) = \frac{d y(x)}{d x}$, so that $y(x)$ is then obtained by the path integration of the neural network. After integration, we numerically evaluate the time from the simulated path
\begin{equation}
    t = \int_0^l \frac{ds}{v(x)} = \int_0^\pi \sqrt{\frac{1+\left(\frac{d y(x)}{d x}\right)^2}{-2gy(x)}} dx
\end{equation}
and minimize it. In the first integrat, $l$ represents the length of the curve. The formula can be easily derived from the conservation of kinetic energy and from a Pythagorean expression $ds^2 = dx^2 + dy^2$. 
For the path construction and backpropagation we employ the \emph{torchdiffeq} python package shipped with the paper \cite{Chen2018NeuralEquations}. 

In this example, we present a problem that could not be solved with just a point-wise neural network optimization but requires consideration of a full path. The results for a short training are in \cref{fig:bchResults}.

\section{Data availability}

The source code for all examples (2D and 5D Muller-Brown, Alanine Dipeptide) are available online on Github \url{https://github.com/martinsipka/rarediffsim}

The Google Colab notebook with the Brachistochrone example is available at: \url{https://colab.research.google.com/drive/1YjIMTFQA0L9oLMkNpV7E2jOxbbzO6PnM?usp=sharing}

\end{document}